\DeclareMathAlphabet{\mathpzc}{OT1}{pzc}{m}{it}
\DeclareMathAlphabet{\mathcal}{OMS}{cmsy}{m}{n}
\newtheorem{corollary}{Corollary}                   						
\newtheorem{remark}{Remark}
\newtheorem{theorem}{Theorem}
\newtheorem{assumption}{Assumption}
\newtheorem{definition}{Definition}
\newtheorem{objective}{Objective}
\DeclareMathOperator{\diag}{diag}
\newcommand{\setof}[1]{\left\{#1\right\}} 
\newcommand{\brackets}[1]{\left[#1\right]}
\newcommand{\braces}[1]{\left(#1\right)}
\newcommand{\norm}[1]{\|#1\|}
\newcommand{\de}{\dot{e}}
\newcommand{\dde}{\ddot{e}}
\newcommand{\cmark}{\ding{51}}%
\newcommand{\xmark}{\ding{55}}%
\newtheorem{prop}{Proposition}
\begin{document}
	%
	\title{A novel passivity based controller for a piezoelectric beam}
	
	\author{
	Krishna~C.~Kosaraju,	Matthijs C. de Jong, 
		 and
		Jacquelien~M.A.~Scherpen
		\thanks{M.C. de Jong, K.C. Kosaraju, and J.M.A. Scherpen are with Jan C. Wilems Center for Systems and Control, ENTEG, Faculty of Science and Engineering, University of Groningen, Nijenborgh 4, 9747 AG Groningen, the Netherlands (email: \{k.c.kosaraju, m.c.dejong, j.m.a.scherpen\}@rug.nl).}
		\thanks{This work is supported by the Netherlands Organization for Scientific Research through Research Programme ENBARK+ under Project 408.urs+.16.005.}%
	}


	\maketitle
	\begin{abstract}
This paper presents a new passivity property for distributed piezoelectric devices with integrable port-variables. We present two new control methodologies by exploiting the integrability property of the port-variables. The derived controllers have a Proportional-Integral (PI) like structure. Finally, we present the simulation results and an in-depth analysis on the tuning gains explaining their transient and the steady-state behaviours.
\end{abstract}
	
	\IEEEpeerreviewmaketitle
	%
	%
	%
	%
	\section{Introduction}
	A piezoelectric beam is a bar of piezoelectric material, that is much longer than it is wide or thick. 
	By sandwiching the piezoelectric material between two electrode layers, the piezoelectric properties can be exploited. 
	From the perspective of control we are interested in deforming the beam, where the inverse piezoelectric effect is exploited. That is, by means of electric stimuli, injected stresses in the material allow the piezoelectric beam to deform. The orientation of the piezoelectric material determines in what direction(s) the beam deforms. Here it's assumed that the piezoelectric beam shows only longitudinal elongating and compressing behavior, see Fig \ref{fig:piezobeam}. This property is useful in shape control of complex mechanical structures, see for instance \cite{Voss2011Shapecontrol,Voss09disc}.
	Piezoelectric beam models originate from the Maxwell's equations and continuum mechanics, that characterize the electric and mechanical characteristics of the beam, respectively. The assumptions on the electric characteristics can be a static electric field, which is the most commonly used assumption, the quasi-static electric field, which restricts the electric displacement, and the fully dynamic electromagnetic field, which includes the magnetic kinetic effects, see for instance \cite{MenOSIAM2014,voss2010port}.
	For the mechanical characteristics various beam theories can be chosen, such as the Euler-Bernoulli or Timoshenko beam theory, see for instance \cite{carrera2011beam}. For the piezoelectric beam considered in this paper, all relations are assumed linear and for the electric and mechanical characteristics the static electric field with Euler-Bernoulli beam theory \cite{carrera2011beam} are considered. Often, physical systems (including a piezoelectric beam) dissipates energy, by means of some sort of damping.
	\begin{figure}
		\includegraphics[width=\columnwidth]{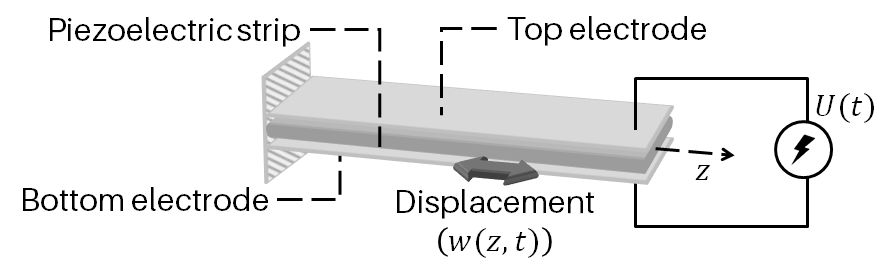}
		\caption{Piezoelectric beam.} 
		\label{fig:piezobeam}
		\vspace{-0.3mm}
	\end{figure}
	In this paper, we start with the distributed port-Hamiltonian formulation \cite{VANDERSCHAFT2002166} for the considered piezioelectric beam.
	The use of the port-Hamiltonian framework \cite{port-HamiltonianIntroductory14} is motivated by its openness to interact with its environment  through the boundary. This allows straightforward interconnection with other open systems, such as controllers and mechanical structures, where the latter is of interest for ongoing research. Furthermore, applying this structure to the system help us discover new passivity properties, that is otherwise not evident. Moreover, this helps us in using the structure preserving spatial discretization method presented in \cite{GoloSchaft2004} for presenting simulation results. In literature, there are several methods that are proposed for stability-analysis/control of a general (damped) wave equations, to cite a few see \cite{shubov1996basis,liu2009elementary,cox1995rate} and the reference therein. This paper propose two alternative approaches using passivity theory. The following are the main contributions.
\vspace{-2mm}
	\subsection{Contributions}
	In this paper, we propose a novel passivity based control technique for a piezoelectric beam actuated through the boundary by an applied voltage $U(t)$, that has viscous damping. It is well known that boundary control of these systems are constrained by the {\em dissipation obstacle} like the problems in \cite{915398}. A work around for this problem is recently presented in \cite{Macchelli2017_BCLDPS_dissipation} using shifted passivity properties \cite{shifted_passivity}. On the other hand, we propose an alternative methodology, which has the following key contributions:
	\begin{itemize}
	    \item [(i)] The storage function considered, is a function of velocities (rather than states). This results in a new passivity property whose port-variable have integrability properties, inspired by \cite{tac, NOLCOS, 7846443,mtns,Cucuzzella_arxiv2019}.
	    \item[(ii)] By utilizing these integrability properties, we propose two control methodologies. These techniques are  constructive and do not rely on finding Casimir functions as in \cite{SchSiu}.
	    \item[(iii)] The proposed control techniques have a Proportional-Integral (PI)  structure. We present an in depth analysis on the effect of tuning parameters on the transient, steady-state responses.
	    \vspace{-2mm}
	\end{itemize}
	\subsection{Outline}
	The paper is organized as follows. In Section II, we present the model of a piezoelectric beam and its port-Hamiltonian formulation. Further, in section III, we formulate the boundary control problem of the considered piezoelectric beam and state the required assumptions on the knowledge of system parameters and measured states. In section IV, we present a new passivity property and propose two control methodologies. Finally, in section V, we present the simulation results and give some analysis on gain tuning of the proposed techniques. Further, in Section VI, we give some concluding remarks and possible future directions for the presented work.
	\subsection{Notation}
		Denote the spatial domain $Z=[0,1]\subset \mathbb{R}$, and $\partial Z=\{0,1\}$ represents the boundary of $Z$.  The operator $\mathrm{d}$ denotes the partial derivative with respect to $z$, i.e., $\mathrm{d}=\dfrac{\partial }{\partial z}$. $\mathcal{L}^2(0,1)$ denotes the space of square-integrable functions on domain $z\in(0,1)$ and $H^1(0,1)$ denotes the first order Sobolov space. Let $z\in Z$ and $t\in \mathbb{R}_{+}$ denotes the spatial and temporal independent variables. Consider the functions $\alpha(z,t),\beta(z,t)$ mapping from $[0,1]\times \mathbb{R}_{+}\rightarrow \mathbb{R}$ that are twice differentiable. For brevity, we denote $\alpha_z=\dfrac{\partial \alpha}{\partial z}$, $\alpha_{zz}=\dfrac{\partial^2 \alpha}{\partial z^2}$, $\dot\alpha=\dfrac{\partial \alpha}{\partial t}$, $\ddot\alpha=\dfrac{\partial^2 \alpha}{\partial t^2}$, $\alpha(0)=\alpha(0,t)$, $\alpha(1)=\alpha(1,t)$ and $\alpha^\ast =\lim\limits_{t\rightarrow \infty}\alpha(t,z)$, i.e., $\alpha^{\ast}$ represents the steady-state value of $\alpha$. Consider a functional $\mathcal{H}(\alpha,\beta)=\int_0^1\mathrm{H}(\alpha,\beta) dz$, where $\mathrm{H}(\alpha,\beta) :\mathcal{L}^2(0,1) \times \mathcal{L}^2(0,1)  \rightarrow\mathbb{R}$. Then $\delta_{\alpha}\mathcal{H}$, $\delta_{\beta}\mathcal{H}$ denote the variational derivative of $\mathcal{H}$ with respect to $\alpha$ and $\beta$, respectively, given by
		\vspace{-3mm} 
		\begin{eqnarray*}
			\delta_{\alpha}\mathcal{H}= \dfrac{\partial }{\partial z}\dfrac{\partial \mathrm{H}}{\partial \alpha_z}-\dfrac{\partial \mathrm{H}}{\partial \alpha},~
			\delta_{\beta}\mathcal{H}= \dfrac{\partial }{\partial z}\dfrac{\partial \mathrm{H}}{\partial \beta_z}-\dfrac{\partial \mathrm{H}}{\partial \beta}.
		\end{eqnarray*}
		Let $a,b \in \mathbb{R}$, then $\diag\{a,b\}\in \mathbb{R}^{2\times 2}$ represent a diagonal matrix with $a$ and $b$ as its diagonal entries.
	\section{Model}
	Consider a viscously damped linear piezoelectric beam with static electric field and Euler-Bernoulli beam theory of length $\ell=1$, such that the spatial variable $z\in Z$. Let the longitudinal displacement with respect to the initial position be denoted by $w(z,t)$ and its velocity is denoted by $\dot{w}(z,t)$. The strain (or deformation) of the beam is denoted by $w_z(z,t)$. Assume that the stress in the beam is linearly related through Hooke's law with the strain of the beam, by the stiffness $C$. Let $\rho$ denote the mass-density and $b$ the viscous damping coefficient. From first principles and application of Hamilton's principle \cite{lanczos1970variational}, the partial-differential-equation (PDE)
	\begin{align}\label{eq:pde_piezo_static}
		\begin{split}
		\rho \ddot{w}(z,t)&=Cw_{zz}(z,t)-b\dot{w}(z,t),
		\end{split}
	\end{align}
	 is obtained, where the coupling between the mechanical and electric domain is provided through the piezoelectric coupling coefficient $\gamma$ that converts actuation by means of an applied voltage $U(t)$ into a mechanical force acting through the boundary. Together, with a clamped left side of the beam, the dynamical equation \eqref{eq:pde_piezo_static} with boundary conditions 
	\begin{align}\label{eq:pde_piezo_static_bc}
	\begin{split}
    w(0,t)&=0, ~ \dot{w}(0,t)=0\\
	Cw_z(1,t)&=-\gamma U(t),
	\end{split}
	\end{align}
	describes the inverse piezoelectric effect of a piezoelectric cantilever beam, that can be stabilized by an appropriate applied voltage $U(t)$ as will be done in this paper. For simplicity of interpretations, from now on the temporal and spatial dependencies of the variables are omitted. Unless special care requires otherwise. Next, we present the distributed port-Hamiltonian formulation \cite{VANDERSCHAFT2002166} for the piezoelectric beam. 
	\subsubsection{Port-Hamiltonian formulation}
 Let $\alpha=\left(\alpha_q,\alpha_p\right)^\top\in {\mathcal{L}^2(0,1)\times \mathcal{L}^2(0,1)}$ denote the energy variables, given by $\alpha_q=w_z,~\alpha_p=\rho \dot{w}$, that describe the total energy of the system \eqref{eq:pde_piezo_static}
 \vspace{-2mm}
	\begin{eqnarray}
	\mathcal{H}=\frac{1}{2}\int_0^1\left(\alpha^\top\mathrm{ P} \alpha\right)dz
	\end{eqnarray}
	\vspace{-1mm}
	where $\mathrm{ P}=\diag\{C,\dfrac{1}{\rho}\}$. Let $H_0^1(0,1):=\setof{z\in H^1(0,1)|z(0)=0}.$ Consequently, we denote the flow and effort variables as $f =\left(f_p,f_q\right)^\top\in \mathcal{L}^2(0,1)\times \mathcal{L}^2(0,1)$ and  $e=\left(e_q,e_p\right)^\top \in {H}^1(0,1)\times {H}_0^1(0,1)$ respectively, given by
	\begin{eqnarray}
	\begin{bmatrix}		f_q\\f_p 	\end{bmatrix}=\dfrac{d}{dt}\begin{bmatrix}		\alpha_q\\\alpha_p 	\end{bmatrix},~~\begin{bmatrix}		e_q\\e_p 	\end{bmatrix}=\begin{bmatrix}		\delta_{\alpha_q}\mathcal{H}\\\delta_{\alpha_p}\mathcal{H} 	\end{bmatrix}.
	\end{eqnarray}
	Denote $\mathrm{d}=\dfrac{\partial}{\partial z}$. Then, system \eqref{eq:pde_piezo_static} can be written in pH formulation, as follows:
	\begin{eqnarray}\label{eq:ph_beam_eqn}
	\begin{bmatrix}		f_q\\f_p 	\end{bmatrix} =
	\underbrace{\begin{bmatrix} 0 & \mathrm{d} \\ \mathrm{d} & -b  \end{bmatrix}}_\mathcal{A}\begin{bmatrix}	e_q \\ e_p 	\end{bmatrix},
	\begin{matrix}
	e_p(1)&=&f_b, \\e_q(1)&=&e_b, 
	\end{matrix}
	\end{eqnarray}
	where $f_b$ and $e_b$ represent the boundary flow and effort variables, respectively, given by $f_b=\dot{w}(1,t)$ and $e_b=-\gamma U(t)$. The domain of $\mathcal{A}$ is $D(\mathcal{A})=\setof{e\in {H}^1(0,1)\times {H}_0^1(0,1)}.$ Next, we establish the following lemma using the Hamiltonian $\mathcal{H}$ as the storage function \cite{VANDERSCHAFT2002166}:
	\begin{corollary}
		The system of equation \eqref{eq:ph_beam_eqn}  is passive with port-variables $e_b=-\gamma U(t)$ and $f_b=\dot{w}(1,t)$.$\hfill\square $
	\end{corollary}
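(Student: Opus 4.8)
The plan is to use the Hamiltonian $\mathcal{H}$ itself as the storage function and to show that its rate of change along trajectories of \eqref{eq:ph_beam_eqn} is bounded above by the product of the boundary port-variables $e_b f_b$. Since $P=\diag\{C,\tfrac{1}{\rho}\}$ is positive definite, $\mathcal{H}$ is a nonnegative quadratic functional, so establishing $\dot{\mathcal{H}}\le e_b f_b$ is precisely the passivity inequality with supply rate $e_b f_b$.

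First I would differentiate $\mathcal{H}$ in time. Because the Hamiltonian density depends algebraically on the energy variables, the chain rule together with the definitions of the flow and effort variables gives
$$\dot{\mathcal{H}} = \int_0^1 \left( e_q f_q + e_p f_p \right) dz .$$
Substituting the dynamics $f_q = \mathrm{d} e_p$ and $f_p = \mathrm{d} e_q - b\,e_p$ read off from the operator $\mathcal{A}$ in \eqref{eq:ph_beam_eqn} then yields
$$\dot{\mathcal{H}} = \int_0^1 \left( e_q\, \mathrm{d} e_p + e_p\, \mathrm{d} e_q \right) dz \;-\; b \int_0^1 e_p^2 \, dz .$$

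The central step is to recognize that the first integrand is a total spatial derivative, $e_q\,\mathrm{d}e_p + e_p\,\mathrm{d}e_q = \mathrm{d}(e_q e_p)$, which reflects the formal skew-symmetry of $\mathcal{A}$ modulo boundary terms. Integrating over $Z=[0,1]$ collapses this contribution to the boundary,
$$\dot{\mathcal{H}} = e_q(1)e_p(1) - e_q(0)e_p(0) - b\int_0^1 e_p^2\, dz .$$
Here the boundary conditions enter decisively: since $e_p\in H_0^1(0,1)$ encodes the clamped left end $\dot{w}(0,t)=0$, the term $e_q(0)e_p(0)$ vanishes. Using $e_q(1)=e_b$ and $e_p(1)=f_b$ from \eqref{eq:ph_beam_eqn} then leaves $\dot{\mathcal{H}} = e_b f_b - b\int_0^1 e_p^2\, dz$.

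Finally, since the viscous damping coefficient satisfies $b\ge 0$, the dissipation integral is nonnegative, whence $\dot{\mathcal{H}}\le e_b f_b$, which is the claimed property. I expect the only delicate point to be the justification of the integration-by-parts/divergence step together with the regularity needed to evaluate the boundary traces: this is where the function-space choices $e_q\in H^1(0,1)$ and $e_p\in H_0^1(0,1)$ do the real work, simultaneously guaranteeing that the traces at $z=0,1$ are well defined and that the clamped-end term drops out.
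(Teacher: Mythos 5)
Your proposal is correct and is exactly the standard energy-balance argument the paper invokes: the paper states this corollary without writing out a proof, simply citing the Hamiltonian $\mathcal{H}$ as the storage function per \cite{VANDERSCHAFT2002166}, and your computation ($\dot{\mathcal{H}}=\int_0^1(e_qf_q+e_pf_p)dz$, substitution of the dynamics, reduction of $\mathrm{d}(e_qe_p)$ to boundary terms, vanishing of the $z=0$ trace via $e_p\in H^1_0(0,1)$, and nonnegativity of the damping integral) is precisely that argument made explicit. No discrepancy with the paper's approach.
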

	We finally conclude this section, by rewriting the port-Hamiltonian system \eqref{eq:ph_beam_eqn} using effort variables $e_q, e_p$, given by
	\vspace{-2mm}
	\begin{eqnarray}\label{eq:model_efforts}
	\dfrac{d}{dt}\begin{bmatrix}
	 \dfrac{1}{C} e_q,\\ \rho e_p
	\end{bmatrix}= \begin{bmatrix} 0 & \mathrm{d} \\ \mathrm{d} & -b  \end{bmatrix}\begin{bmatrix}
	 e_q,\\  e_p
	\end{bmatrix},
	\begin{matrix}
	e_p(1)&=&f_b,\\e_q(1)&=&e_b. 
	\end{matrix}
	\end{eqnarray}

	\section{Problem formulation}
	The objective of this paper is to present a passivity based control methodology for the piezoelectric beam \eqref{eq:pde_piezo_static} though boundary \eqref{eq:pde_piezo_static_bc}. We next make it explicit as follows: 
	\begin{objective}\label{obj:strain_z1}
		Stabilize the strain $w_z$ at $z=1$ to a non-zero value $\mathcal{E}^*$, i.e.,
		\vspace{-2mm}
		\begin{eqnarray}
		\lim\limits_{t\rightarrow \infty}w_z(1)\rightarrow \mathcal{E}^\ast
		\end{eqnarray}
		\vspace{-2mm}
$\hfill\square $	\end{objective}
 In the sequel, assume that there exist a steady state solution $(e_q^*,e_p^*,U^*)$ for system \eqref{eq:ph_beam_eqn}. Then, Objective \ref{obj:strain_z1} results in the following boundary value problem
	\vspace{-2mm}
	\begin{align}\label{eq:equilibrium_poi}
	\begin{split}
	\mathrm{d}e_p^*&=0,~ \mathrm{d}e_q^*-be_p^*=0,\hspace{.51cm}
	e_q^*(1)=C\mathcal{E}^\ast,\\
	e_p^\ast(0)&=e_p^\ast(1)=0,\hspace{.5cm}
	U^*=\frac{-C\mathcal{E}^*}{\gamma}.
	\end{split}	\vspace{-2mm}
	\end{align}
	Equation \eqref{eq:equilibrium_poi} indicates that the  control Objective \ref{obj:strain_z1} result in a constant strain type equilibrium (i.e., $w_z=0$).
	In the following assumptions, we state the existence of the steady-state solutions and the available information more explicit:\vspace{-1mm}
	\begin{assumption}[existence of a unique steady-state solution]\label{ass:existance}
	Assume that there exist an unique solution for the boundary value problem \eqref{eq:equilibrium_poi}.$\hfill\square $
	\end{assumption}
	Concisely, the equilibrium point of interest can be written as
	\begin{align}\label{eq:desired_equilibrium}
	\begin{split}\vspace{-2mm}
	\left(e_q^*(1),e_p^*(1),U^*\right)=(C\mathcal{E}^*,0,\frac{-C\mathcal{E}^*}{\gamma}).\vspace{-2mm}
	\end{split}\vspace{-4mm}
	\end{align}
	We next assume the knowledge of required states and parameters for the controller design.
	\begin{assumption}[available information]\label{ass:available}
		The boundary variable $e_p(1)$ is measurable. The system parameters $\rho$, $C$ and $b$ are (possibly unknown) constants. The piezoelectric coupling coefficient $\gamma$ is a known constant. The desired strain $\mathcal{E}^\ast$ and voltage $U^\ast$ are known.$\hfill\square $
	\end{assumption}
\vspace{-1mm}
	\section{Proposed Solution: passivity based control}
	In this section, we first present a new passivity property for the piezoelectric beam model \eqref{eq:ph_beam_eqn}. Later on, we use this passive map and propose two control techniques.
	
	\subsection{New passivity property}
	It is well known that to propose a new passivity property one needs to find a new storage function. Recently, the authors in \cite{tac, NOLCOS, 7846443,mtns,Cucuzzella_arxiv2019} have proposed a new storage function which depends on the `first time derivative of the effort variables'. The use of such storage function yields a new passive map with supply-rate depending on the state-variables and the time-derivative of the state-variables and the input'.  For the piezoelectric beam this results in the following storage function
	\begin{align}\label{eq:storagefunc_plant}
	\begin{split}
	S:=\tfrac{1}{2}\int_{0}^{1}\brackets{\tfrac{1}{C}\dot{e}_q^2+\rho \dot{e}_p^2}dz.
	\end{split}
	\end{align}
	As a consequence, we have the following passivity property.
	\begin{prop}\label{prop:passivity}
		The piezoelectric beam \eqref{eq:model_efforts} is passive with storage function \eqref{eq:storagefunc_plant} and port variables $\dot{U}$ and $-\gamma\dot{e}_p(1)$.
	\end{prop}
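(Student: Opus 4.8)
The plan is to verify the passivity inequality directly: differentiate the storage function $S$ along the trajectories of \eqref{eq:model_efforts} and show that $\dot S$ is bounded above by the product of the two claimed port variables. First I would write
\[
\dot S = \int_0^1\left[\tfrac{1}{C}\dot e_q\ddot e_q + \rho\,\dot e_p\ddot e_p\right]dz,
\]
which requires expressions for $\ddot e_q$ and $\ddot e_p$. These I would obtain by differentiating in time the two scalar equations packaged in \eqref{eq:model_efforts}, giving $\tfrac{1}{C}\ddot e_q = \mathrm{d}\dot e_p$ and $\rho\ddot e_p = \mathrm{d}\dot e_q - b\dot e_p$. This step presumes the solution is regular enough that $\dot e_q,\dot e_p$ themselves satisfy the differentiated system, which I would carry as a standing smoothness assumption.

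Substituting these relations into the expression for $\dot S$ yields
\[
\dot S = \int_0^1\left[\dot e_q\,\mathrm{d}\dot e_p + \dot e_p\,\mathrm{d}\dot e_q\right]dz - b\int_0^1\dot e_p^2\,dz .
\]
The key observation is that the first integrand is exactly $\mathrm{d}(\dot e_q\dot e_p)$, so that integral collapses to the boundary term $\dot e_q(1)\dot e_p(1) - \dot e_q(0)\dot e_p(0)$. Here I would invoke the boundary data recorded in \eqref{eq:ph_beam_eqn} and the function spaces: since $e_p\in H_0^1(0,1)$ we have $e_p(0)=0$ and hence $\dot e_p(0)=0$, which annihilates the contribution at $z=0$; and since $e_q(1)=e_b=-\gamma U$, differentiation gives $\dot e_q(1)=-\gamma\dot U$.

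Combining these facts produces
\[
\dot S = -\gamma\dot U\,\dot e_p(1) - b\int_0^1\dot e_p^2\,dz \;\le\; \dot U\cdot\bigl(-\gamma\dot e_p(1)\bigr),
\]
where the inequality uses $b\ge 0$. This is precisely the passivity inequality $\dot S \le \dot U \cdot (-\gamma\dot e_p(1))$ with input $\dot U$ and collocated output $-\gamma\dot e_p(1)$, which establishes the claim.

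The computation is essentially an energy balance plus one integration by parts, so the arithmetic is routine. I expect the only genuinely delicate point to be the time-differentiation of the PDE, i.e. justifying that $\dot e_q$ and $\dot e_p$ are admissible states of the differentiated port-Hamiltonian system and inherit the boundary conditions ($\dot e_p(0)=0$ in particular); the remaining work—recognizing the total $z$-derivative, evaluating the boundary contributions, and tracking the sign and factor $\gamma$ so that the boundary term reassembles into the stated supply rate—is straightforward.
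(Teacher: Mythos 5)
Your proposal is correct and follows essentially the same route as the paper's own (heavily abbreviated) proof: differentiate $S$ in time, substitute the time-differentiated dynamics of \eqref{eq:model_efforts}, integrate the total $z$-derivative $\mathrm{d}(\dot e_q\dot e_p)$ to a boundary term, and use $\dot e_p(0)=0$ together with $\dot e_q(1)=-\gamma\dot U$ plus $b\ge 0$ to obtain $\dot S\le \dot U\cdot(-\gamma\dot e_p(1))$. You have in fact supplied the intermediate steps the paper omits, including the regularity caveat about differentiating the PDE in time, so no changes are needed.
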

	\begin{proof}
		The rate of change of storage function \eqref{eq:storagefunc_plant} along the trajectories of the  piezoelectric beam \eqref{eq:model_efforts} is
		\begin{align*}
		\begin{split}
		\frac{d}{dt}S&=\int_{0}^{1}\brackets{\de_q\tfrac{1}{C}\dde_q+\de_p\rho \dde_p}dz
		\leq -\gamma\de_p|_{z=1} \dot{U}(t)
		\end{split}
		\end{align*}
		Where in the second line we use the spatial dynamics from equation \eqref{eq:pde_piezo_static}. In the fourth and fifth line we use the boundary conditions from equation \eqref{eq:pde_piezo_static_bc}.
		This concludes the proof.
	\end{proof}
	\begin{remark}\label{rem::ext_dyn}
		Note that the storage function is in terms of velocities. Hence, the passivity property presented, not only considers the system dynamics but also considers the differentially extended dynamics of the system \eqref{eq:model_efforts}. This implies, that the state variables in the extended system are $\left(e_q,e_p,\dot{e}_q,\dot{e}_p\right)$ which are useful for the stability analysis. 
	\end{remark}
	\begin{remark}\label{rem_storage_edot_and_e}
		The storage function $S$ in equation \eqref{eq:storagefunc_plant} depends on $\dot{e}_q$ and $\dot{e}_p$. However, $S$ also depends on $e_q$ and $e_p$ through $\dot{e}_q$ and $\dot{e}_p$, see \eqref{eq:model_efforts}. Hence, $S$ can be written as follows \cite{Cucuzzella_arxiv2019},
			\begin{align}\label{eq:storagefunc_planta}
	\begin{split}
	S=\hspace{-1mm}\tfrac{1}{4}\int_{0}^{1}\hspace{-1mm}\brackets{\tfrac{1}{C}\dot{e}_q^2+\rho \dot{e}_p^2\hspace{-1mm}+\hspace{-1mm}\dfrac{1}{\rho} \left(de_q-be_p\right)^2\hspace{-1mm}+\hspace{-1mm}C\left(de_p\right)^2}\hspace{-1mm}dz.
	\end{split}
	\end{align}$\hfill\square $
		\end{remark}
Following the control techniques proposed in \cite{tac}, we present two indirect passivity based PI controllers for the considered Control objective \ref{obj:strain_z1} of the piezoelectric beam \eqref{eq:ph_beam_eqn}. The first technique is called {\em output shaping} which is partially known in the literature as PID-PBC. In output shaping we use the integrability property of the output in shaping the closed-loop storage function thence avoiding the search for Casimir functions. From Proposition \ref{prop:passivity}, one can note that the input port-variable $\dot{U}$ is also integrable. Motivated by this we present a novel control technique called {\em input shaping} for infinite dimensional systems, see \cite{tac} for input shaping on finite dimensional systems. 
	However, before presenting these techniques, we first formulate the stability definitions for distributed parameter system from the literature.
	%
	%
	\subsection{Stability}\label{sec::stability}
	For finite dimensional systems, to prove Lyapunov stability, it suffices to show that the Lyapunov function is positive definite and the first time derivative along the trajectories is negative definite. Additionally, for infinite dimensional system
	the norm associated with the stability argument needs to be specified. Since, stability with respect to one norm does not immediately imply stability with respect to another norm.
	Next the stability arguments and sufficient conditions for stability are defined. Consider an $n$-dimensional smooth manifold $Z$ with smooth $(n-1)$-dimensional boundary $\partial Z$.
boundary
	Let $\mathcal{M}_{\infty}$ be the configuration space of a distributed parameter system with spatial domain $z$, and $\norm{\cdot}$ be a norm on $\mathcal{M}_{\infty}$.
		\begin{definition}\label{def::stability}\cite{ZhenBaoOmer}
			Denote by  $\;x^{\ast}(z)\in \mathcal{M}_{\infty}$ an equilibrium configuration for a distributed parameter system on $\mathcal{M}_{\infty}$. Then, $x^\ast(z)$ is said to be stable in the sense of Lyapunov with respect to the norm $\norm{\cdot}$ if, for every $\epsilon\geq 0$ there exists a $\delta\geq 0$ such that, 
			\begin{eqnarray*}
			\begin{matrix}
				\norm{x(z,0)-x^\ast(z)}\leq \delta & \implies & \norm{x(z,t)-x^\ast(z)}\leq\epsilon
			\end{matrix}
			\end{eqnarray*}
			for all $t\geq 0$, where $x(z,0)\in \mathcal{M}_\infty$ is the initial configuration of the system.$\hfill\square $
		\end{definition}
		In case of the piezoelectric beam \eqref{eq:ph_beam_eqn} the configuration manifold $M_{\infty}$ is $({H}^1(0,1)\times {H}_0^1(0,1))^2$ and the state is the collection of variables $x(z,t)=\left(e_q,e_p,\dot{e}_q,\dot{e}_p\right)$ (see Remark \ref{rem::ext_dyn}). We state the following stability theorem for infinite-dimensional systems, which is also referred to as Arnold’s theorem for stability of infinite-dimensional systems.
			\begin{figure}[ht]
			\begin{subfigure}[b]{0.32\columnwidth}
				\centering
				\includegraphics[width=\columnwidth]{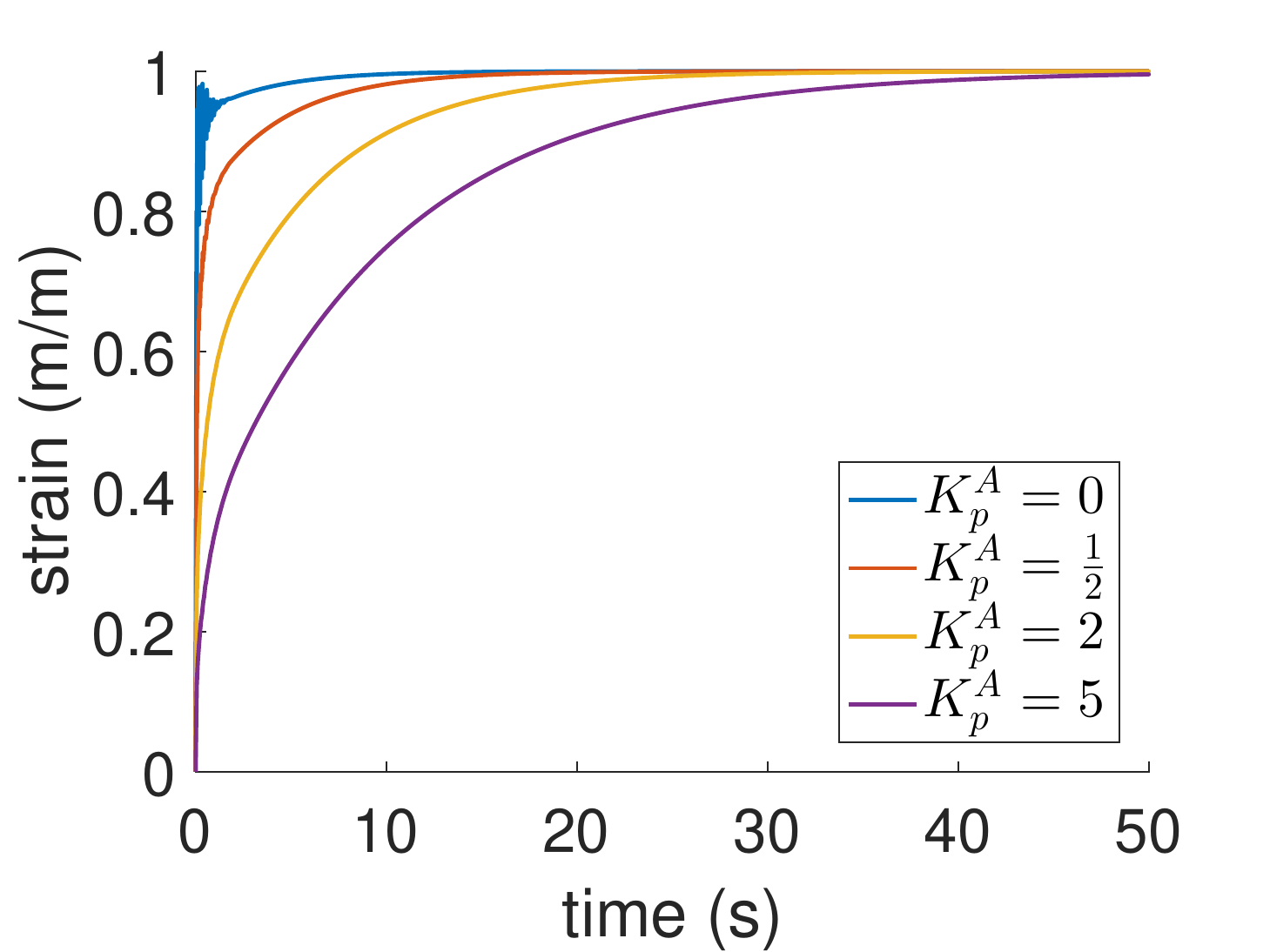}
				\caption{\tiny{$K_i^A=0$, $\bar{U}^*=\tfrac{3}{4}$}}
			\end{subfigure}
			\begin{subfigure}[b]{0.32\columnwidth}
				\includegraphics[width=\columnwidth]{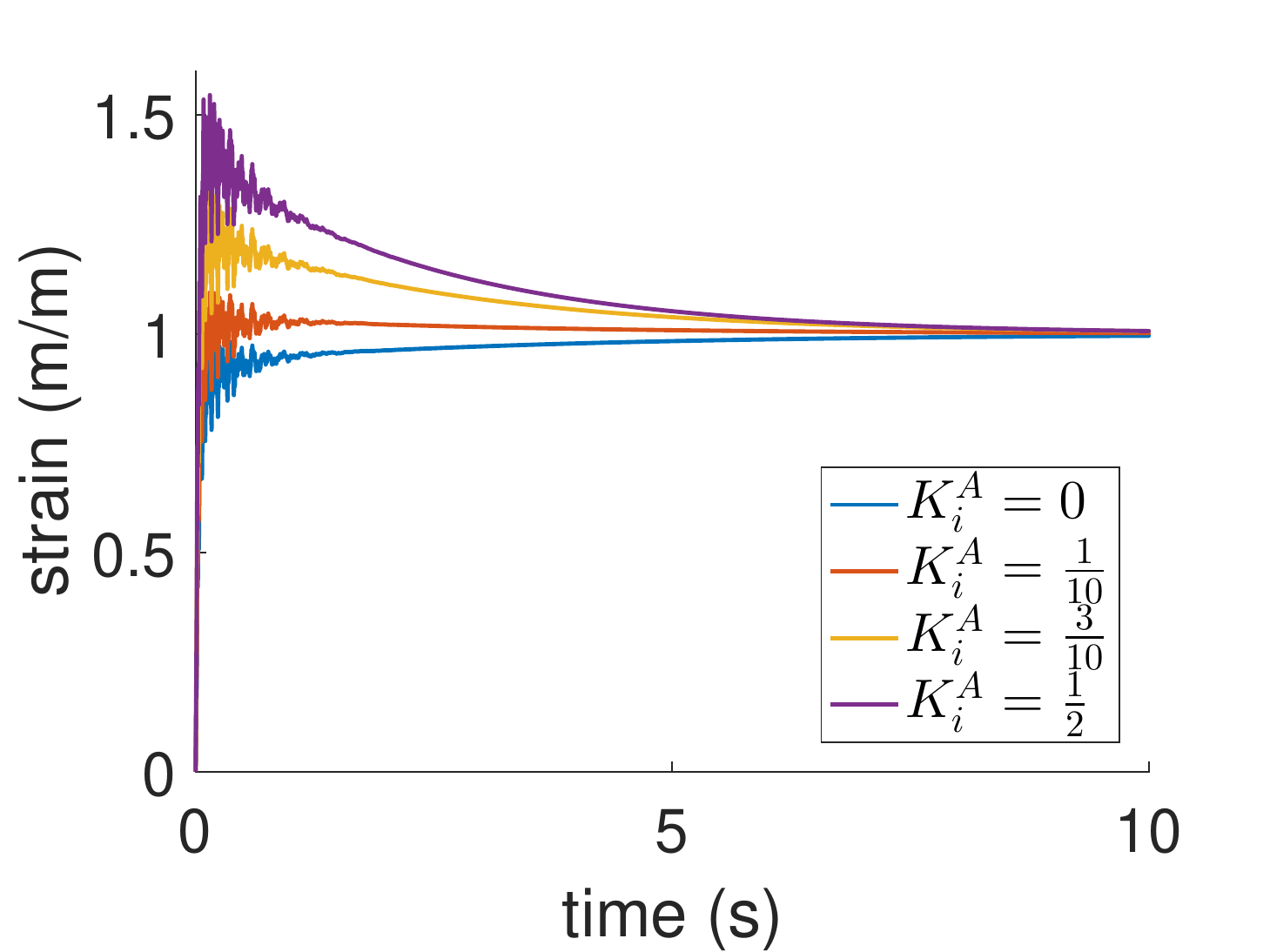}
				\caption{\tiny{$K_p^A=0$, $\bar{U}^*=\tfrac{3}{4}$}}
			\end{subfigure}
			\begin{subfigure}[b]{0.32\columnwidth}
				\includegraphics[width=\columnwidth]{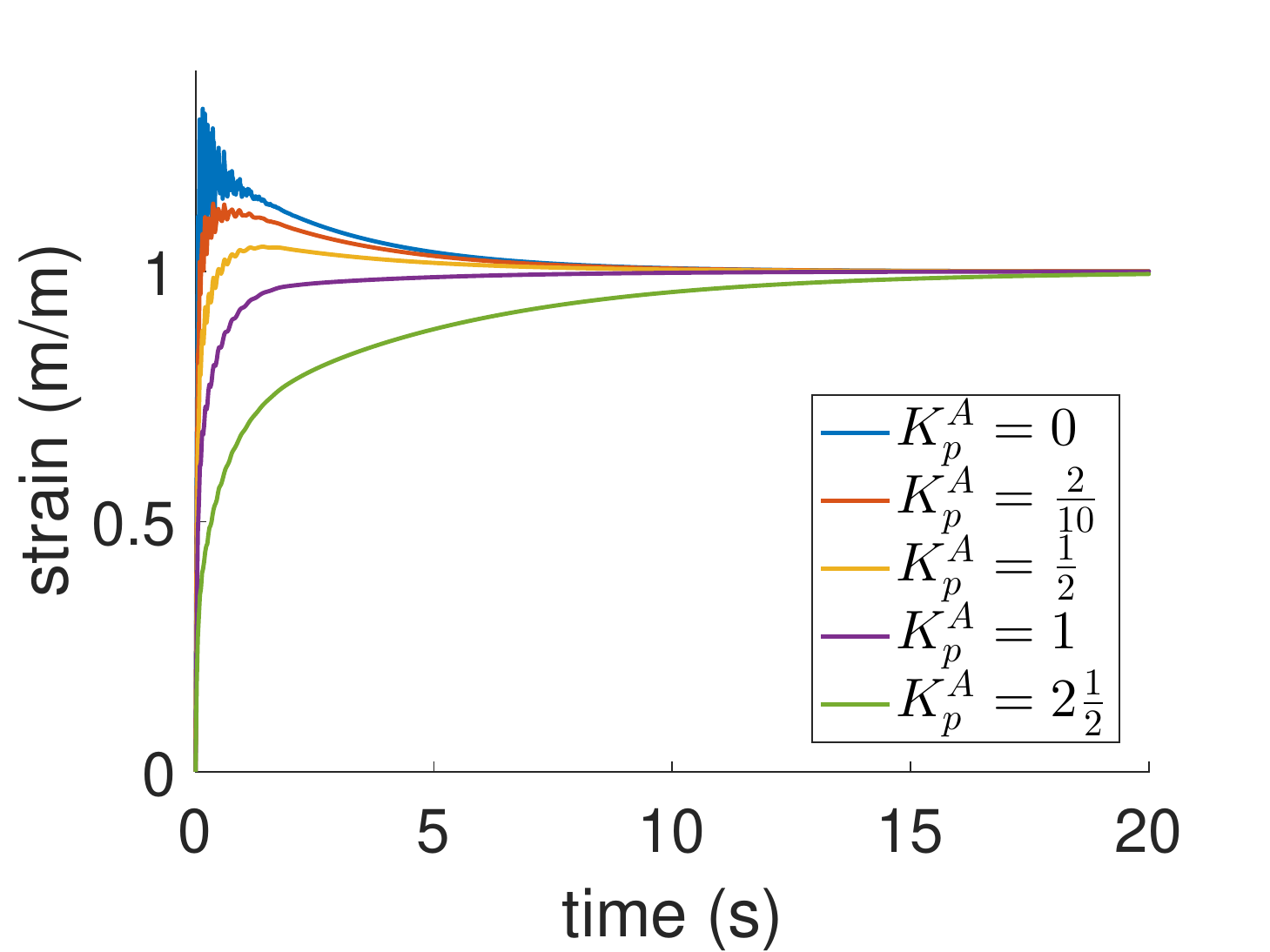}
				\caption{\tiny{$K_i^A=\tfrac{3}{10}$, $\bar{U}^*=\tfrac{3}{4}$}}
			\end{subfigure}
			\caption{Closed-loop response through the output shaping. - In (a) the proportional action is shown, in (b) the integral-term. In (c) the proportional action varies for a fixed integral action wit reference action. The used parameters are given in Table \ref{tab:params_method_A}.}
		%
			\label{fig:sim_method_A}
		\end{figure}
		
		\begin{theorem}\label{thm::stab}\textit{(Stability of an infinite-dimensional system \cite{swaters}):} Consider a dynamical system
		\begin{eqnarray}
		\frac{\partial}{\partial t}x(z,t)=f(x(z,t))
		\end{eqnarray}
 on a linear space $\mathcal{M}_\infty$, where $x^{\ast}(z)\in \mathcal{M}_{\infty}$ is an equilibrium. Assume there exists a solution to the system and suppose there exists function $V:\mathcal{M}_\infty\rightarrow \mathbb{R}$ such that
			\begin{eqnarray}\label{stab_th_firstcond}
			\begin{matrix}
				\delta_{x}V(x^\ast(z))=0 & \text{and} & \frac{\partial V}{\partial t} \leq 0.
			\end{matrix}
			\end{eqnarray}
			Denote $\Delta x=x(z,t)-x^\ast(z)$ and $\mathcal{N}(\Delta x)= V(x^\ast(z) +\Delta x)-V(x^\ast(z))$. Show that there exist a positive triplet $\alpha$, $\gamma_1$ and $\gamma_2$ satisfying 
			\begin{eqnarray}\label{stabthm_cond2}
			\gamma_1\norm{\Delta x}^2 \leq  \mathcal{N}(\Delta x) \leq \gamma_2\norm{\Delta x}^\alpha.
			\end{eqnarray} 
			Then $x^\ast $ is a stable equilibrium.$\hfill\square $
		\end{theorem}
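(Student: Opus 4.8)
The plan is to establish Lyapunov stability directly from Definition \ref{def::stability}, treating the two conditions in \eqref{stab_th_firstcond} and the coercivity estimate \eqref{stabthm_cond2} as the standing hypotheses and assembling them into a single chain of inequalities. First I would observe that the functional $\mathcal{N}(\Delta x(t))=V(x(z,t))-V(x^\ast(z))$ is non-increasing along trajectories: because $x^\ast(z)$ is an equilibrium, $V(x^\ast(z))$ is constant in time, so differentiating gives $\tfrac{d}{dt}\mathcal{N}=\tfrac{\partial V}{\partial t}\leq 0$. The vanishing first variation $\delta_x V(x^\ast(z))=0$ does not enter the inequality chain directly; its role is structural, guaranteeing that $x^\ast$ is a critical point of $V$ so that $\mathcal{N}$ carries no linear term and the quadratic lower bound in \eqref{stabthm_cond2} is attainable.

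The core of the argument is then a three-step sandwich. Fixing $\epsilon>0$, I would combine the coercivity lower bound evaluated at time $t$, the monotonicity of $\mathcal{N}$ just established, and the upper bound evaluated at the initial time to obtain
\[
\gamma_1\norm{\Delta x(t)}^2 \;\leq\; \mathcal{N}(\Delta x(t)) \;\leq\; \mathcal{N}(\Delta x(0)) \;\leq\; \gamma_2\norm{\Delta x(0)}^\alpha ,
\]
valid for every $t\geq 0$, which rearranges to $\norm{\Delta x(t)}^2\leq \tfrac{\gamma_2}{\gamma_1}\norm{\Delta x(0)}^\alpha$. To match the $\epsilon$--$\delta$ formulation I would then choose $\delta=\braces{\tfrac{\gamma_1}{\gamma_2}\epsilon^2}^{1/\alpha}$, so that $\norm{\Delta x(0)}\leq\delta$ forces $\norm{\Delta x(t)}^2\leq \tfrac{\gamma_2}{\gamma_1}\delta^\alpha=\epsilon^2$, i.e. $\norm{\Delta x(t)}\leq\epsilon$ for all $t\geq 0$. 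This is precisely stability in the sense of Definition \ref{def::stability}, which concludes the proof.

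Since a solution is assumed to exist and the two-sided bound is taken as a hypothesis, the argument is largely routine; the one step demanding care is the monotonicity of $\mathcal{N}$, where I must ensure $\tfrac{\partial V}{\partial t}\leq 0$ is read as a dissipation inequality \emph{along genuine trajectories} of $\tfrac{\partial}{\partial t}x=f(x)$ rather than merely at the equilibrium, so that the middle inequality of the sandwich is legitimate. The real conceptual obstacle is deferred to the application of the theorem: verifying estimate \eqref{stabthm_cond2} in the specified norm on $\mathcal{M}_\infty$ amounts to showing that the energy-type excess $\mathcal{N}$ is equivalent to the squared norm near $x^\ast$, and it is this norm-equivalence—not the stability deduction itself—that typically requires the problem-specific work.
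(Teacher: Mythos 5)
The paper does not actually prove this theorem: it is imported verbatim from the cited reference \cite{swaters} and stated without proof (the statement simply ends with the tombstone), so there is no in-paper argument to compare against. Your proposal supplies the standard proof of this Arnold-type stability result, and it is correct: monotonicity of $\mathcal{N}$ along trajectories follows from $\tfrac{\partial V}{\partial t}\leq 0$ together with the time-independence of $V(x^\ast(z))$, and the sandwich $\gamma_1\norm{\Delta x(t)}^2 \leq \mathcal{N}(\Delta x(t)) \leq \mathcal{N}(\Delta x(0)) \leq \gamma_2\norm{\Delta x(0)}^\alpha$ with the choice $\delta=\braces{\tfrac{\gamma_1}{\gamma_2}\epsilon^2}^{1/\alpha}$ delivers exactly the $\epsilon$--$\delta$ statement of Definition \ref{def::stability}. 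Your side remarks are also on target: the condition $\delta_x V(x^\ast(z))=0$ plays no role in the inequality chain (it is in fact already implied by the two-sided bound \eqref{stabthm_cond2} when $V$ is differentiable at $x^\ast$), and the genuine labour in applications is the norm-equivalence estimate \eqref{stabthm_cond2}, which is precisely what the paper verifies by hand in the proof of its Proposition 2 by exhibiting explicit $\gamma_1$, $\gamma_2$ and $\alpha=2$ for the shaped storage function.
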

		\subsection{Passivity based control: Output shaping}\label{sec:output_shaping}
		In this method, we construct the closed-loop storage function by utilizing the integrability property of the output port-variable $-\gamma\dot{e}_p(1,t)$. This is similar to the recently emerging PBC techniques called Proportional-Integral-Derivative-PBC (PID-PBC) \cite{7993047}.
		\begin{prop}\label{prop:controla}
			Let Assumption \ref{ass:existance} and \ref{ass:available} hold and consider the piezoelectric beam \eqref{eq:model_efforts}, controlled through the boundary by
			\begin{eqnarray}\label{eq:stabalizing_controller_B}
			\begin{split}
			\dot{\phi}&= e_p(1,t)-e_p^*(1)\\
			U(t)&=
			\frac{1}{\gamma}\brackets{K_i\phi+K_p\braces{e_p(1,t)-e_p^*(1)}}+U^\ast,
			\end{split}
			\end{eqnarray}
			where $K_i,~K_p>0$ are tuning parameters, $\phi \in \mathbb{R}$. Then, the closed-loop system is stable at the operating point \eqref{eq:desired_equilibrium}.
		\end{prop}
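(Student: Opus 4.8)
The plan is to follow the output-shaping paradigm suggested just before the statement: exploit the integrability of the output port-variable $-\gamma\dot{e}_p(1)$ from Proposition~\ref{prop:passivity} to augment the velocity-based storage function \eqref{eq:storagefunc_plant} with a penalty on the boundary deviation $e_p(1,t)-e_p^*(1)$, producing a genuine Lyapunov functional for the closed loop, and then to discharge the hypotheses of Theorem~\ref{thm::stab}. Concretely, I would take the candidate
\begin{eqnarray*}
V := S + \tfrac{K_i}{2}\braces{e_p(1,t)-e_p^*(1)}^2,
\end{eqnarray*}
with $S$ as in \eqref{eq:storagefunc_plant}, and verify that it meets both \eqref{stab_th_firstcond} and \eqref{stabthm_cond2} at the equilibrium \eqref{eq:desired_equilibrium}.

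The central computation is the dissipation estimate. Differentiating the control law \eqref{eq:stabalizing_controller_B} in time and using $\dot\phi=e_p(1,t)-e_p^*(1)$ gives $\dot{U}=\tfrac{1}{\gamma}\brackets{K_i\braces{e_p(1,t)-e_p^*(1)}+K_p\dot{e}_p(1,t)}$. Substituting this into the supply rate of Proposition~\ref{prop:passivity} yields
\begin{eqnarray*}
\dot{S}\leq -\gamma\dot{e}_p(1,t)\,\dot{U} = -K_i\dot{e}_p(1,t)\braces{e_p(1,t)-e_p^*(1)}-K_p\dot{e}_p(1,t)^2.
\end{eqnarray*}
Because $e_p^*(1)$ is constant, the first term on the right is precisely $-\tfrac{d}{dt}\brackets{\tfrac{K_i}{2}(e_p(1,t)-e_p^*(1))^2}$, so that
\begin{eqnarray*}
\dot{V}\leq -K_p\dot{e}_p(1,t)^2\leq 0,
\end{eqnarray*}
which is the second requirement in \eqref{stab_th_firstcond}. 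This is exactly where integrability does the work: the proportional/integral cross-term telescopes into a perfect time derivative, which is what lets output shaping bypass the search for Casimir functions.

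It then remains to check the variational and convexity conditions. For the first part of \eqref{stab_th_firstcond} I would use the equivalent form \eqref{eq:storagefunc_planta} of $S$ and evaluate $\delta_x V$ at $x^*=(e_q^*,e_p^*,0,0)$: the velocity contributions vanish since $\dot{e}_q^*=\dot{e}_p^*=0$, the spatial contributions vanish by the equilibrium relations \eqref{eq:equilibrium_poi} ($\mathrm{d}e_p^*=0$ and $\mathrm{d}e_q^*-be_p^*=0$), and the boundary penalty is critical at $e_p(1)=e_p^*(1)$; hence $\delta_xV(x^*)=0$. Writing $\Delta x=(\tilde e_q,\tilde e_p,\dot{\tilde e}_q,\dot{\tilde e}_p)$ and inserting \eqref{eq:equilibrium_poi}, the functional $\mathcal{N}(\Delta x)$ reduces to
\begin{eqnarray*}
\mathcal{N}(\Delta x)=\tfrac14\int_0^1\brackets{\tfrac1C\dot{\tilde e}_q^2+\rho\dot{\tilde e}_p^2+\tfrac1\rho(\mathrm{d}\tilde e_q-b\tilde e_p)^2+C(\mathrm{d}\tilde e_p)^2}dz+\tfrac{K_i}{2}\tilde e_p(1)^2 ,
\end{eqnarray*}
a purely quadratic form, so the upper bound in \eqref{stabthm_cond2} with $\alpha=2$ follows routinely once the trace/Sobolev embedding is used to dominate $\tilde e_p(1)^2$ by the $H^1$-norm.

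The hard part will be the lower bound $\gamma_1\norm{\Delta x}^2\leq\mathcal{N}(\Delta x)$. The natural tool is the Poincaré inequality, which is available for $\tilde e_p$ because $e_p,e_p^*\in H_0^1(0,1)$ force $\tilde e_p(0)=0$; thus $C(\mathrm{d}\tilde e_p)^2$ controls $\norm{\tilde e_p}_{H^1}$, after which the mixed term $\tfrac1\rho(\mathrm{d}\tilde e_q-b\tilde e_p)^2$ controls $\mathrm{d}\tilde e_q$ in $L^2$. The delicate issue, which fixes the precise norm for which stability is claimed, is bounding the $L^2$-component of $\tilde e_q$ itself: since $e_q$ carries no homogeneous boundary condition, a constant strain offset is not penalized by $\mathcal{N}$, and this direction must be handled explicitly (for example by working in the norm induced by $\mathcal{N}$, or by quotienting out the constant-$e_q$ mode that the single boundary measurement $e_p(1)$ leaves undetected). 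With the lower bound established, \eqref{stabthm_cond2} holds and Theorem~\ref{thm::stab} delivers stability of the operating point \eqref{eq:desired_equilibrium}.
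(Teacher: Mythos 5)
Your proposal is correct and follows essentially the same route as the paper: augment $S$ with $\tfrac{K_i}{2}(e_p(1)-e_p^*(1))^2$, differentiate the control law so the $K_i$ cross-term telescopes into a perfect time derivative, obtain $\dot S_d\leq -\int_0^1 b\dot e_p^2\,dz-K_p\dot e_p^2(1)\leq 0$, and then invoke Theorem~\ref{thm::stab} with $\alpha=2$ using the quadratic form \eqref{eq:storagefunc_planta}. The only point where you are more scrupulous than the paper is the lower bound in \eqref{stabthm_cond2}: the paper resolves it exactly via your first suggested fallback, namely by \emph{defining} $\norm{\Delta x}^2$ to be the unit-coefficient version of $\mathcal{N}(\Delta x)$ (so $\gamma_1=\tfrac14\min\{\tfrac1C,\rho,2K_i\}$, $\gamma_2=\tfrac14\max\{\tfrac1C,\rho,2K_i\}$), which silently leaves the constant-$e_q$ direction you flag outside the norm in which stability is asserted.
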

		\begin{proof}
			First the integrated output port-variable $e_p(1)$ to shape the closed-loop storage function is used as follows
			\begin{eqnarray}\label{eq:storage_fun_os}
			S_d:=S+\frac{1}{2}K_i\braces{e_p(1)-e_p^*}^2.
			\end{eqnarray}
			As argued in Remark \ref{rem::ext_dyn}, take as state-variable  $x=\left(e_q,e_p,\dot{e}_q,\dot{e}_p\right)^\top$. Moreover, the desired values $x^\ast$ is the solution to the boundary value problem \eqref{eq:equilibrium_poi}. Consider the time derivative of the proposed controller $U(t)$ (in the second line of equation \eqref{eq:stabalizing_controller_B})
			\begin{eqnarray}\label{os_diff_contr}
			\gamma \dot{U}=K_i\left( e_p(1,t)-e_p^*(1)\right)+K_p\dot{e}_p(1,t).
			\end{eqnarray}
									\begin{figure}[ht]
				\begin{subfigure}[b]{0.32\columnwidth}
					\includegraphics[width=\columnwidth]{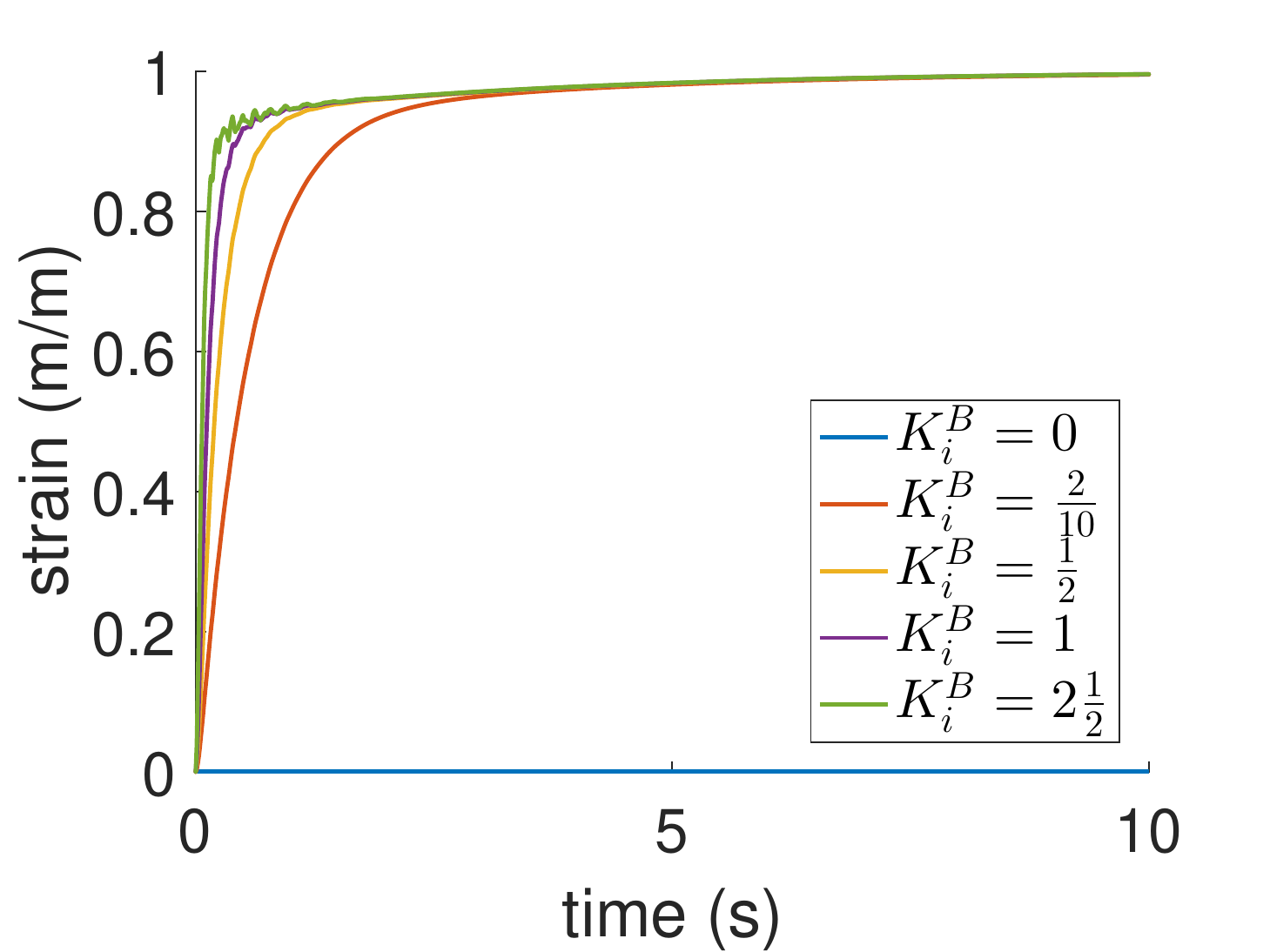}
					\caption{\tiny{$K_p^B=0$, $\bar{U}^*=\tfrac{3}{4}$}}
				\end{subfigure}
				\begin{subfigure}[b]{0.32\columnwidth}
					\includegraphics[width=\columnwidth]{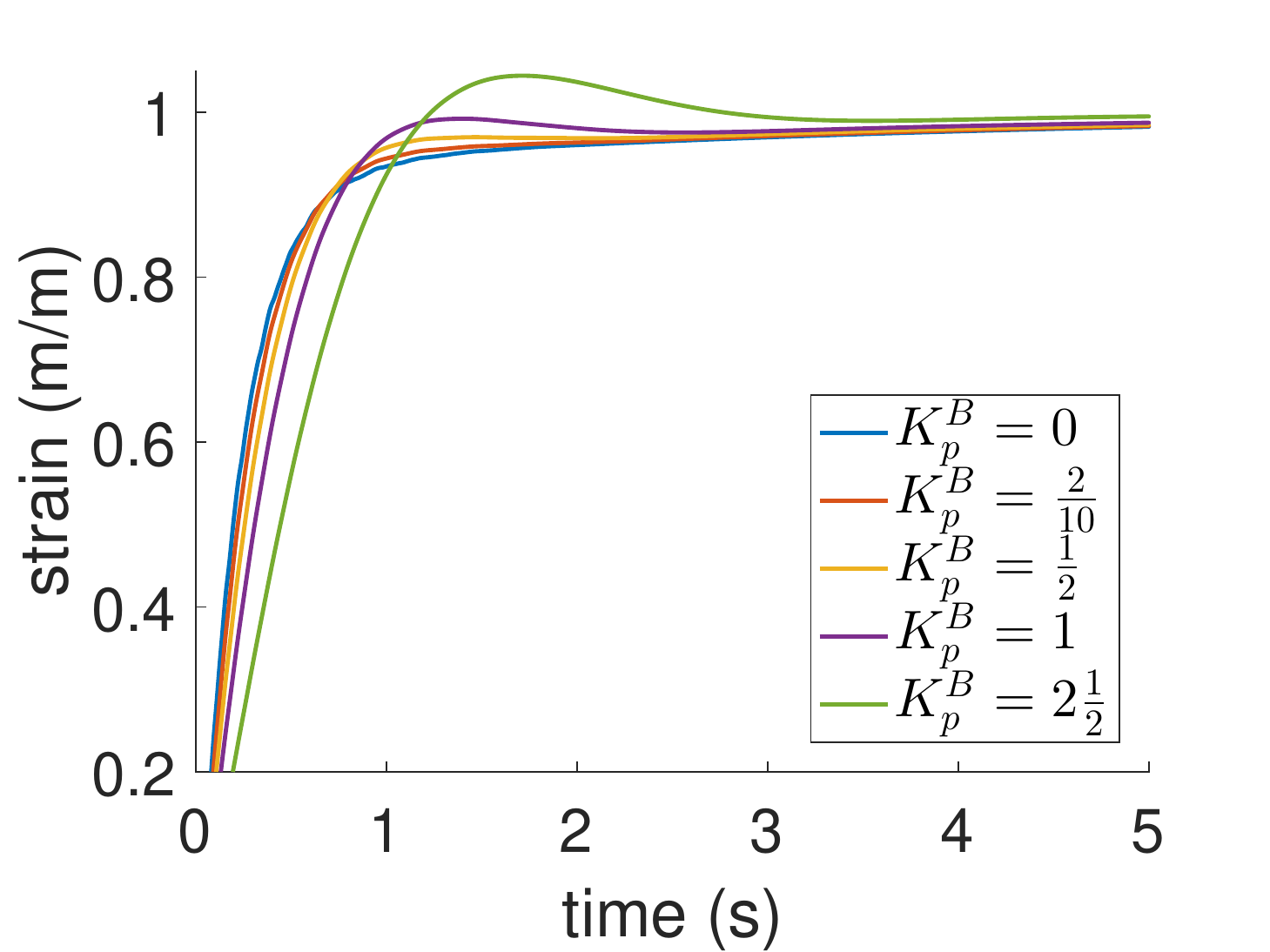}
					\caption{\tiny{$K_i^B=\tfrac{1}{2}$, $\bar{U}^*=0$}}
				\end{subfigure}
				\begin{subfigure}[b]{0.32\columnwidth}
					\includegraphics[width=\columnwidth]{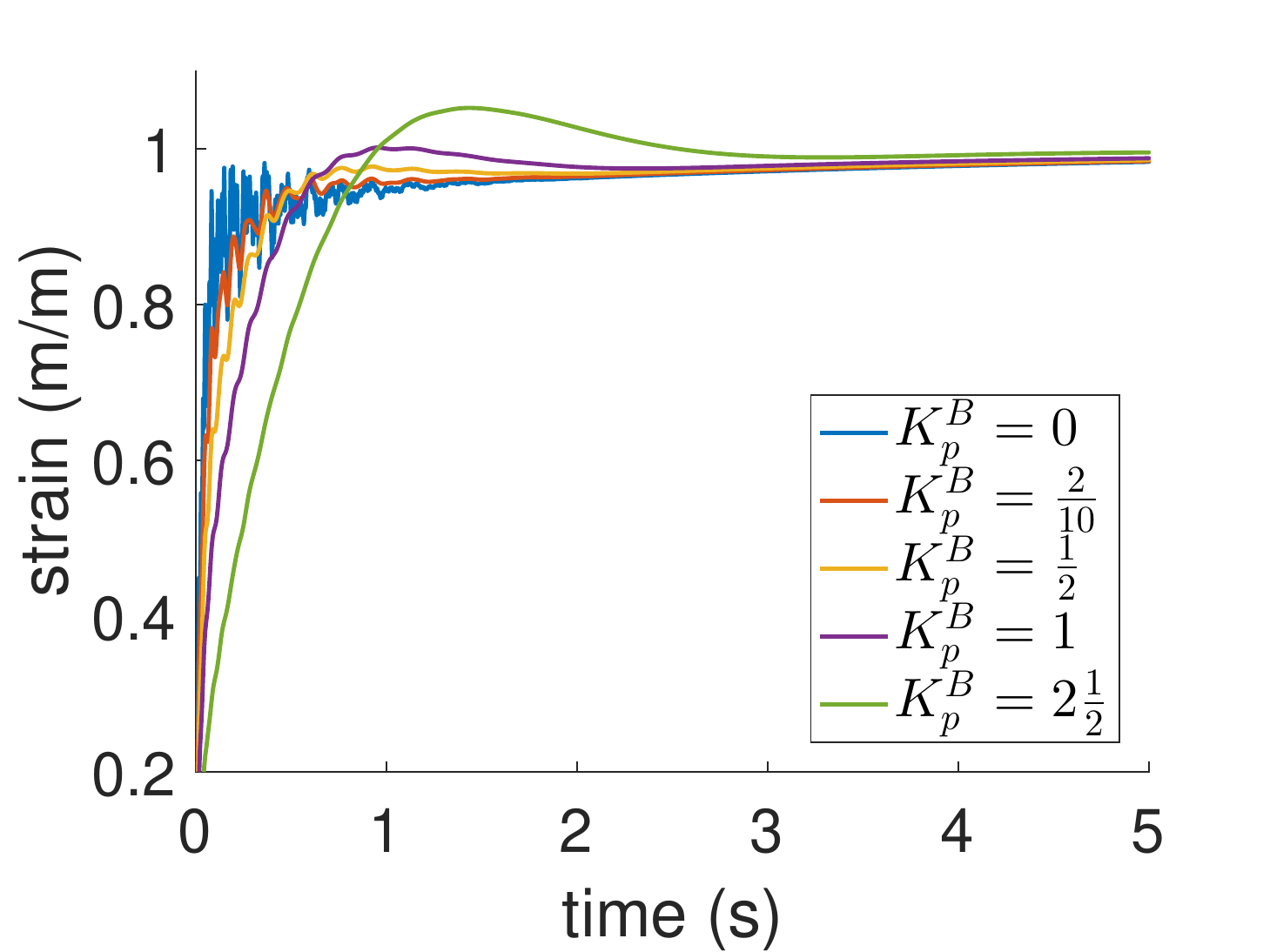}
					\caption{\tiny{$K_i^B=\tfrac{1}{2}$, $\bar{U}^*=\tfrac{3}{4}$}}
				\end{subfigure}
				\caption{Closed-loop response through input shaping. - In (a) the integral action is shown. In (b) the proportional action varies for a fixed integral action, without reference action. In (c) same as in (b) except for inclusion of the reference action. The used parameters are given in Table \ref{tab:params_method_A}.} 
				\label{fig:sim_method_B}
			\end{figure}
			Next, we start with showing the first condition \eqref{stab_th_firstcond} in Theorem \ref{thm::stab}. Firstly, it is straight forward to show that $\delta_{x}S_d(x^\ast)=0$. Secondly, consider the time derivative of the closed-loop storage function, along the closed-loop trajectories of the piezoelectric beam \eqref{eq:model_efforts} with controller \eqref{eq:stabalizing_controller_B}
			\begin{eqnarray}\label{eq_Sddot_a}
			\begin{split}
			\dfrac{\partial }{\partial t}S_d
			&= -\int_{0}^{1}b\de_p^2dz-K_p\dot{e}_p^2(1)\leq 0,
			\end{split}
			\end{eqnarray}	
			where, in the first line we made use of Proposition \ref{prop:passivity} and in the third line, we made use of equation \eqref{os_diff_contr}. This implies that $ \frac{\partial }{\partial t}S_d\leq 0$. Denoting $\Delta x= x-x^\ast$; $\mathcal{N}(\Delta x)= S_d(x^\ast +\Delta x)-S_d(x^\ast)$ can be simplified as 
			\begin{eqnarray*}
				\mathcal{N}(\Delta x)&=\tfrac{1}{4}\int_{0}^{1}\left(\tfrac{1}{C}\left(\Delta\dot{ e}_q\right)^2+\rho \left(\Delta\dot{e}_p\right)^2+C\left(d\Delta e_p\right)^2 \right.\\&\left.+\dfrac{1}{\rho} \left(d\Delta e_q-b\Delta e_p\right)^2\right)dz+\frac{K_i}{2}\braces{\Delta e_p(1)}^2,
				\end{eqnarray*}
				where we made use of $S$ in Remark \ref{rem_storage_edot_and_e} and \eqref{eq:storagefunc_planta}.
			Then, the condition given in equation \eqref{stabthm_cond2} of Theorem \ref{thm::stab} holds with
			\begin{eqnarray*}
				||\Delta x||^2&=\int_{0}^{1}\left(\left(\Delta\dot{ e}_q\right)^2+ \left(\Delta\dot{e}_p\right)^2+\left(d\Delta e_p\right)^2 \right.\\&\left.+ \left(d\Delta e_q-b\Delta e_p\right)^2\right)dz+\braces{\Delta e_p(1)}^2,
			\end{eqnarray*} $\alpha=2$, $\gamma_1=\frac{1}{4}\min\{\frac{1}{C},\rho,2K_i\}$ and $\gamma_2=\frac{1}{4}\max\{\frac{1}{C},\rho,2K_i\}$.
		\end{proof}
		\subsection{Passivity based control: Input shaping }\label{sec:input_shaping}
		Besides shaping the output port-variable, the input port-variable can also be shaped, i.e. design a controller for $\dot{U}(t)$. In doing so, a new technique is presented, called {\em input shaping}. This methodology uses the integrability property of the input port-variable $\dot{U}$ (rather than the output port-variable) to shape the closed-loop storage function. The authors believe that this has never been explored in the literature for infinite-dimensional systems and is therefore considered to be a novel contribution of this paper. A comparison between output shaping and input shaping techniques is presented in the next section.
		\begin{prop}
			Let Assumption \ref{ass:existance} and \ref{ass:available} hold and consider the piezoelectric beam \eqref{eq:model_efforts}, controlled through the boundary by
			\begin{eqnarray}\label{eq:stabalizing_controller_input_shaping}
			\begin{split}
			\dot \psi&= \left(U-U^\ast\right)\\
			U(t)&=
			-\frac{1}{K_p}\brackets{K_i\psi-\gamma\braces{e_p(1)-e_p^*}}+U^\ast,
			\end{split}
			\end{eqnarray}
			where $K_i,~K_p>0$ are tuning parameters, $\psi\in \mathbb{R}$. Then, the closed-loop system is stable at the operating point \eqref{eq:desired_equilibrium}.
		\end{prop}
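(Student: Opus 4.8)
The plan is to mirror the structure of the output-shaping proof (Proposition \ref{prop:controla}), since the input-shaping controller is designed precisely so that the same passivity property from Proposition \ref{prop:passivity} can be exploited. First I would define a shaped closed-loop storage function. Here the key difference is that we shape using the integrated \emph{input} port-variable rather than the output: because $\dot\psi = U - U^\ast$, the quantity $\psi$ plays the role of an integrated state measuring the accumulated input deviation. I would propose a candidate of the form
\begin{eqnarray*}
S_d := S + \tfrac{1}{2}K_i\,\psi^2,
\end{eqnarray*}
with $S$ as in \eqref{eq:storagefunc_plant}, and verify that at the equilibrium \eqref{eq:desired_equilibrium} we have $\psi^\ast = 0$ (so that $U = U^\ast$ is consistent with $\dot\psi = 0$) and $\delta_x S_d(x^\ast) = 0$, establishing the first part of condition \eqref{stab_th_firstcond} in Theorem \ref{thm::stab}.

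Next I would differentiate $S_d$ along the closed-loop trajectories. From Proposition \ref{prop:passivity} the storage $S$ satisfies $\dot S \le -\gamma\,\dot e_p(1)\,\dot U$, so
\begin{eqnarray*}
\dfrac{\partial}{\partial t}S_d \le -\int_0^1 b\,\de_p^2\,dz -\gamma\,\dot e_p(1)\,\dot U + K_i\,\psi\,\dot\psi.
\end{eqnarray*}
The crux of the argument is to show that the controller \eqref{eq:stabalizing_controller_input_shaping} forces the cross terms to combine into a nonpositive quantity. I would differentiate the control law: since $U = -\tfrac{1}{K_p}[K_i\psi - \gamma(e_p(1)-e_p^\ast)] + U^\ast$, differentiating in time and using $\dot\psi = U-U^\ast$ gives an expression for $\dot U$ in terms of $\psi$, $\dot e_p(1)$, and $U-U^\ast$. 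Substituting this, together with $\dot\psi = U-U^\ast$, into the bound above, the aim is to produce a perfect negative-definite form such as $-\tfrac{K_i}{K_p}(U-U^\ast)^2$ (or a damping-like term) plus the strictly dissipative spatial term, yielding $\frac{\partial}{\partial t}S_d \le 0$.

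The main obstacle I anticipate is exactly this algebraic bookkeeping of the cross terms: unlike output shaping, where $\gamma\dot U$ is directly the integral-plus-proportional combination in \eqref{os_diff_contr} and the $\dot e_p(1)$ factor cancels cleanly, here $\dot U$ is defined implicitly through the controller, so I must carefully solve for $\dot U$ and confirm that the proportional gain $K_p$ enters so as to cancel the indefinite $-\gamma\dot e_p(1)\dot U$ contribution rather than reinforce it. This is where the sign conventions and the $K_p>0$ hypothesis are essential, and I expect the choice of the $-\tfrac{1}{K_p}$ prefactor in the control law to be precisely what makes the cancellation work.

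Finally, with $\frac{\partial}{\partial t}S_d \le 0$ established, I would verify the norm-equivalence condition \eqref{stabthm_cond2}. Writing $\Delta x = x - x^\ast$ and using the equivalent form \eqref{eq:storagefunc_planta} of $S$ from Remark \ref{rem_storage_edot_and_e}, I would compute $\mathcal{N}(\Delta x) = S_d(x^\ast+\Delta x)-S_d(x^\ast)$ as a sum of squared spatial integrals plus $\tfrac{K_i}{2}\psi^2$, identify the appropriate norm (the extended-state norm analogous to the one used in Proposition \ref{prop:controla}, augmented by the $\psi^2$ coordinate), and read off $\alpha = 2$ together with $\gamma_1 = \tfrac{1}{4}\min\{\tfrac{1}{C},\rho,2K_i\}$ and $\gamma_2 = \tfrac{1}{4}\max\{\tfrac{1}{C},\rho,2K_i\}$. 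Invoking Theorem \ref{thm::stab} then concludes that $x^\ast$ is stable at the operating point \eqref{eq:desired_equilibrium}.
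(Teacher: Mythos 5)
Your overall architecture (mirror Proposition \ref{prop:controla}, invoke Proposition \ref{prop:passivity} and Theorem \ref{thm::stab}) is the right one, but the shaped storage function you choose is not the one that makes the argument close, and the step you yourself flag as the ``crux'' cannot be completed with it. The paper shapes with the input deviation itself, $S_d := S + \tfrac{1}{2}K_i\braces{U-U^\ast}^2$, not with $\psi$. The reason becomes visible once the cross-term bookkeeping is carried out. Differentiating the control law in \eqref{eq:stabalizing_controller_input_shaping} and using $\dot\psi = U-U^\ast$ gives $\gamma\dot{e}_p(1) = K_p\dot{U} + K_i\left(U-U^\ast\right)$, so the passivity inequality of Proposition \ref{prop:passivity} contributes
\begin{equation*}
-\gamma\,\dot{e}_p(1)\,\dot{U} \;=\; -K_p\dot{U}^2 - K_i\left(U-U^\ast\right)\dot{U}.
\end{equation*}
With the paper's shaping term the additional derivative is $K_i\left(U-U^\ast\right)\dot{U}$, which cancels the indefinite piece exactly and leaves $\tfrac{\partial}{\partial t}S_d = -\int_0^1 b\,\dot{e}_p^2\,dz - K_p\dot{U}^2\leq 0$. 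With your term $\tfrac{1}{2}K_i\psi^2$ the additional derivative is instead $K_i\psi\left(U-U^\ast\right)$, so you are left with the residue $K_i\left(U-U^\ast\right)\left(\psi-\dot{U}\right)$, which is sign-indefinite: stability must hold for all initial conditions in a neighborhood of the equilibrium, and initializing $\psi$ large with $e_p(1)-e_p^\ast$ chosen so that $U-U^\ast$ has the sign of $\psi$ makes $\tfrac{\partial}{\partial t}S_d>0$, since the spatial damping integral does not control this boundary/controller term. Hence your candidate is not a valid Lyapunov functional for the closed loop.

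The conceptual slip is in identifying the ``integrated input port-variable'': the passive input port-variable in Proposition \ref{prop:passivity} is $\dot{U}$, so its integral is $U$ (equivalently $U-U^\ast$, i.e.\ $\dot\psi$); your $\psi$ is one integration too many. Replacing $\tfrac{1}{2}K_i\psi^2$ by $\tfrac{1}{2}K_i\dot\psi^2 = \tfrac{1}{2}K_i\braces{U-U^\ast}^2$ repairs the proof and reduces it to the paper's. The remaining steps you outline (checking $\delta_xS_d(x^\ast)=0$ and the norm bounds \eqref{stabthm_cond2}) then go through, with the caveat that the extra quadratic term in $\mathcal{N}(\Delta x)$ is $\tfrac{K_i}{2}\left(\Delta U\right)^2$ with $\Delta U = -\tfrac{1}{K_p}\left(K_i\psi-\gamma\Delta e_p(1)\right)$, so the norm on the extended state must be augmented by this combination rather than by $\psi^2$ alone.
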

	\begin{proof}
		By using the storage function $$	S_d:=S+\frac{1}{2}K_i\braces{U-U^\ast}^2,$$ the proof is analogous to that of Proposition \ref{prop:controla}. 
	\end{proof}

		\section{Analysis and simulation results}
		In this section the proposed controllers are analyzed and simulation results are presented. The closed-loop systems are implemented using the mixed-finite-element method \cite{GoloSchaft2004} with $N=16$ segments. The time-discretization is performed with a RK4 scheme \cite{bradie2006friendly}, using the fixed time-step $\Delta t=0.0001$. 
		The assigned parameters of the open-loop system \eqref{eq:ph_beam_eqn} are of illustrative nature and take the values $(\rho,\ \ell,\ C,\ \gamma,\ b)=(
		1,\ 1,\ \tfrac{3}{4},\ \tfrac{1}{10},\ 7)$. It is possible to use parameters that are nigh to true piezoelectric material. However, the numerics become more tedious. Using the initial condition as the open-loop equilibrium $e^q(z)=e^p(z)=0$ $\forall z$. The control objective is given in \eqref{eq:desired_equilibrium} to stabilize the closed-loop systems at the desired equilibrium \eqref{eq:equilibrium_poi}, with desired strain $\mathcal{E}^*=1$. The initial conditions of the integrator action of the controllers are chosen to be zero (since \eqref{eq:stabalizing_controller_B} and \eqref{eq:stabalizing_controller_input_shaping} already take the desired equilibrium \eqref{eq:desired_equilibrium} into account). The simulations provide insights in the influence of parameter tuning on the closed-loop behaviour. This allows for a comparison of the two controllers.
	 \subsection{Analysis \& simulation: Method output shaping}   For the purpose of comparing the controllers from sections \ref{sec:output_shaping} \textit{(output shaping)} and \ref{sec:input_shaping} \textit{(input shaping)}, it is useful to define $\bar{U}(t):=-\gamma U(t)$ such that $\bar{U}^*={-\gamma U^*}$, then \eqref{eq:stabalizing_controller_B} can be written as
		\begin{eqnarray}\label{eq:stabalizing_controller_A_adjusted}
			\begin{split}
			\dot{\phi}&= e_p(1,t)-e_p^*\\
			\bar{U}(t)&=
			{-K^A_i\phi-K^A_p\braces{e_p(1,t)-e_p^*}}+\bar{U}^\ast,
			\end{split}
			\end{eqnarray}
			where $K^A_p:={K_p}$ and $K^A_i:={K_i}$. The controller  \eqref{eq:stabalizing_controller_A_adjusted} can be interpreted as a composition of three super-imposed control actions. In particular, an integral action with respect to the output of the open-loop system, effectuated by $K_i^A$, a proportional action with respect to the output of the open-loop system, effectuated by $K_p^A$, and a reference action with magnitude $\bar{U}^*$ for the input $\bar{U}(t)$.\\
		Three situations are considered. The tuning of the proportional action and integral action can be seen in Fig \ref{fig:sim_method_A}(a) and Fig \ref{fig:sim_method_A}(b), respectively. The influence of the reference action is shown in Fig \ref{fig:sim_method_A}(c). Not considering a reference action result in a 70\% steady-state error. An overview of the used variables is given in Table \ref{tab:params_method_A}.\\
		    In Fig \ref{fig:sim_method_A}(a) it can be seen that the proportional action damps the response. By further increase of the proportional action, the system becomes over-damped. Note that the early vibrations for $K_p^A=0$ are diminished by the proportional action. In Fig \ref{fig:sim_method_A}(b) it can be seen that more integral action result in higher overshoot and more vibrations. However, a subtle integral action could benefit the rise-time of the closed-loop system. In Fig \ref{fig:sim_method_A}(c) it can be seen that the reference action influences the steady-state. In Fig \ref{fig:sim_method_A}(d) the steady-state error goes to zero. In both Fig \ref{fig:sim_method_A}(c-d) it is shown that for a fixed integral action, the proportional action can be used to damp-out vibrations and reduce the overshoot, from the integral action.  
        \begin{remark}
    		In this note, the output of the open-loop system is the velocity at the tip, i.e. $e_p(1,t)=\dot{w}(1,t)$, The controller proposed in \cite{Macchelli2017_BCLDPS_dissipation} is described as a PD-controller with respect to the longitudinal displacement at the tip, i.e. $w(1,t)$. Therefore, it can be concluded that their proportional action coincides with the integral action of the output shaping controller \eqref{eq:stabalizing_controller_A_adjusted}, and their derivative action coincides with the proportional action of the output shaping controller \eqref{eq:stabalizing_controller_A_adjusted}. Moreover, if a step-response is applied to the closed-loop system proposed in \cite{Macchelli2017_BCLDPS_dissipation}, the dynamical controller \eqref{eq:stabalizing_controller_A_adjusted} and the one proposed in \cite{Macchelli2017_BCLDPS_dissipation} coincide. The storage function proposed in this paper takes on a differential passivity approach, while the storage function used in \cite{Macchelli2017_BCLDPS_dissipation} to stabilize the system, takes on an incremental passivity approach.
		\end{remark}
		\begin{figure}
			\noindent
			\begin{minipage}[t]{0.25\textwidth}
				\raggedright
				\includegraphics[width=\textwidth]{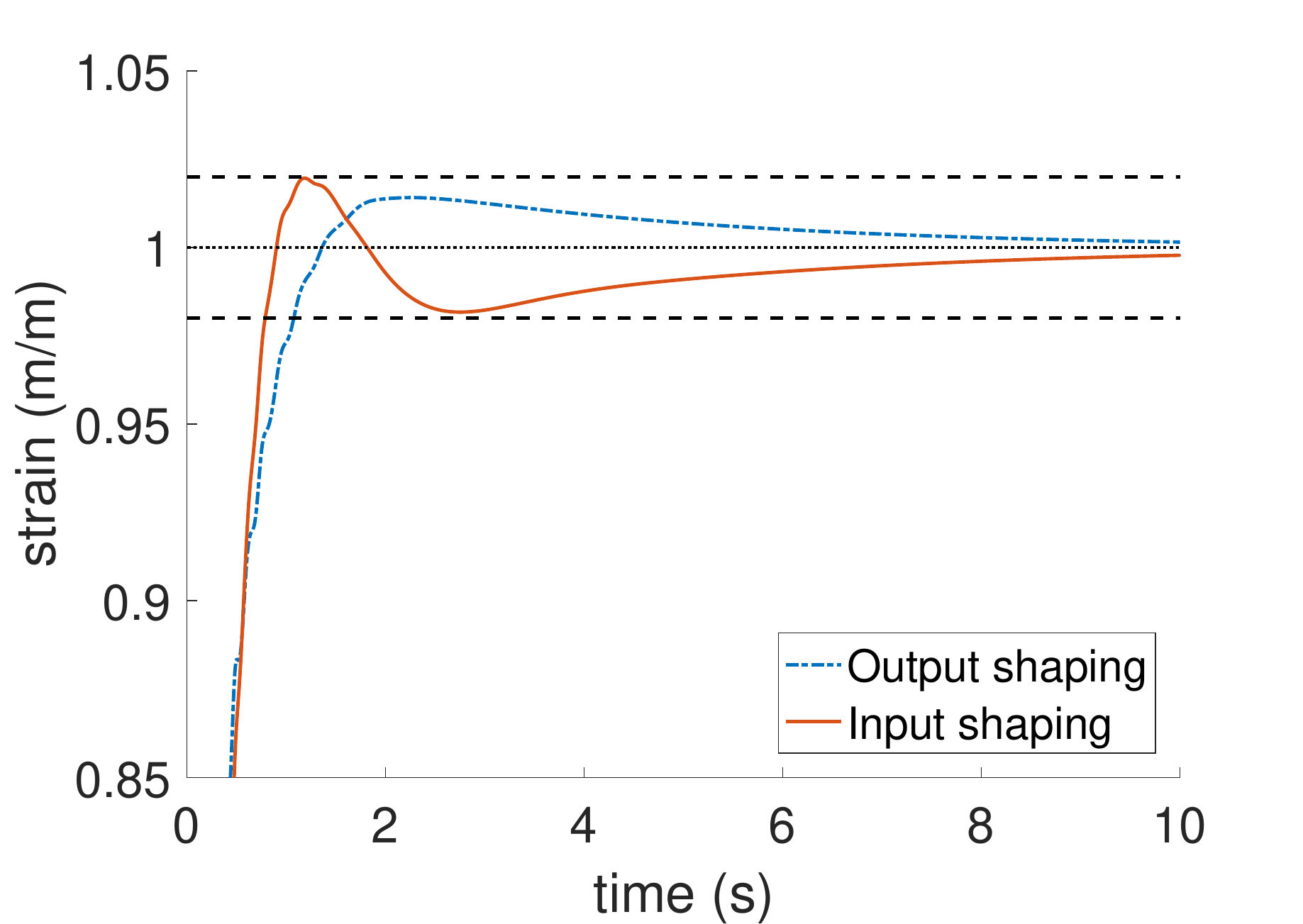}
			\end{minipage}%
			\begin{minipage}[t]{0.25\textwidth}
				\includegraphics[width=\textwidth]{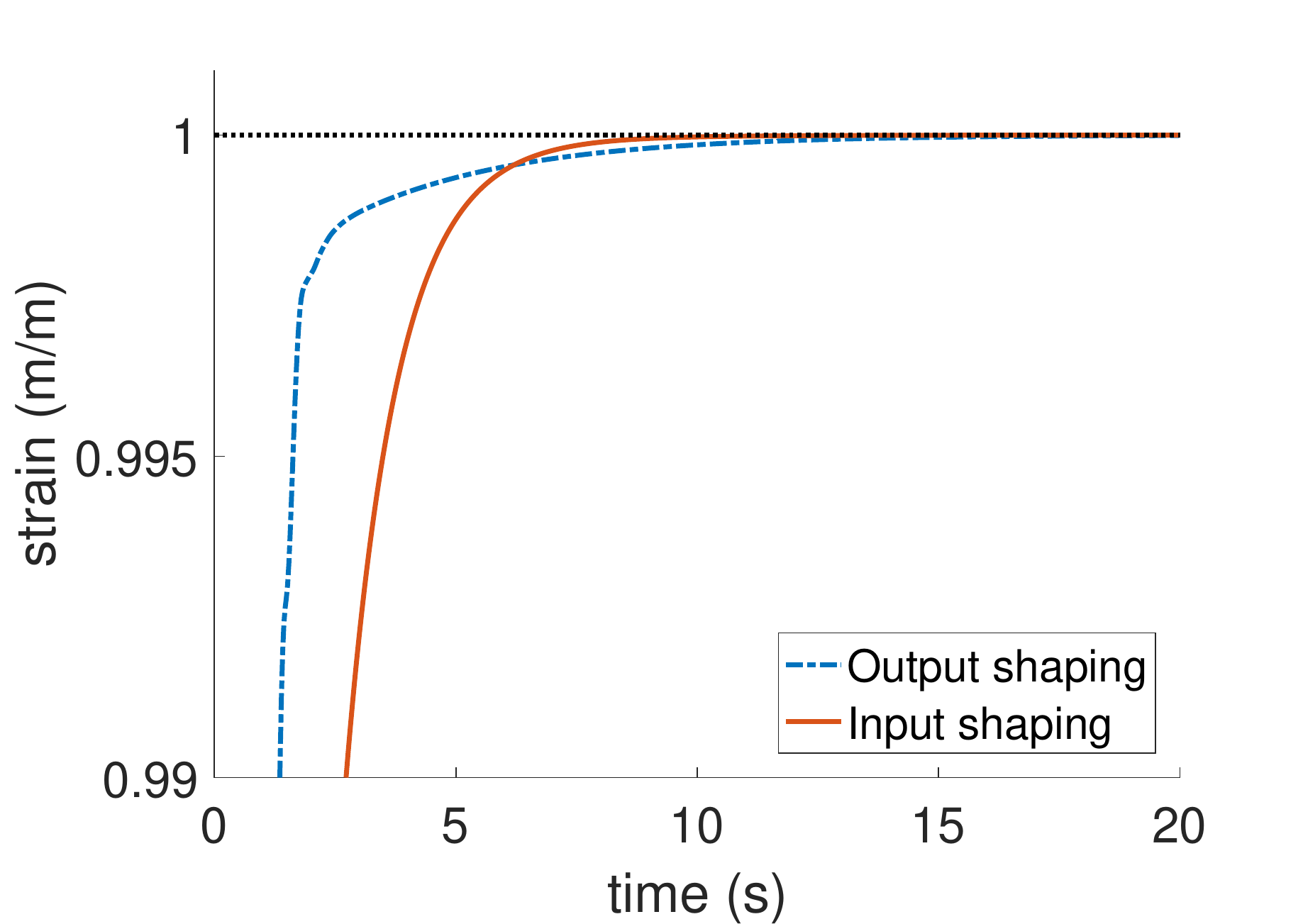}
			\end{minipage}
					\caption{Closed-loop response for tuned controllers - Left: case 1. Controller A has gains $(K_i^A,\ K_p^A) =(0.5,\ 1.35)$, controller B has gains $(K_i^B,\ K_p^B)=(0.45,\ 1.4)$. Right: case 2. Controller A has gains $(K_i^A,\ K_p^A) =(0.25,\ 0.65)$ and controller B has gains $(K_i^B,\ K_p^B)=(0.1,\ 0.5)$.} 
		\label{fig:sim_method_AB}
		\end{figure}
	 \subsection{Analysis  \& simulation: Input shaping} 
	 
	 Again, let $\bar{U}(t)=-\gamma U^*$, as with \eqref{eq:stabalizing_controller_B}. Then,  \eqref{eq:stabalizing_controller_input_shaping} becomes
		\begin{eqnarray}\label{eq:stabalizing_controller_input_shaping_adjusted}
			\begin{split}
			\dot \psi&= \left(U-U^\ast\right)\\
			\bar{U}(t)&=
			{K^B_i\psi-K^B_p\braces{e_p(1)-e_p^*}}+\bar{U}^\ast,
			\end{split}
			\end{eqnarray}
		with tuning gains $K^B_p:=\frac{\gamma^2}{K_p}$ and $K^B_i:=\frac{\gamma K_i}{K_p}$.
 		The controller \eqref{eq:stabalizing_controller_input_shaping_adjusted} can be interpreted as a composition of three super-imposed actions, i.e. an integral action with respect to the input $U(t)$, a proportional action with respect to the output of the open-loop system, and a reference action with magnitude $\bar{U}^*$ for the input $\bar{U}(t)$.
\begin{table}
			\caption{Used variables - Output shaping} \label{tab:params_method_A} 
			\begin{center}
				\begin{tabular}{ l | l | l | c }
					\hline
					Fig & Fixed & Variable & Reference \\ \hline \hline
					 \ref{fig:sim_method_A}(a) & $K_i^A=0$ & $K_p^A\in\{0 , \frac{1}{2} ,  2 ,5\}$ & \cmark\\  \hline
					 \ref{fig:sim_method_A}(b) & $K_p^A=0$ & $K_i^A\in\{0 , \frac{1}{10} ,\frac{3}{10} ,\frac{1}{2} \}$ & \cmark \\ \hline
					 \ref{fig:sim_method_A}(c) & $K_i^A=\tfrac{3}{10}$ &  $K_p^A\in\{0 ,  \frac{2}{10} ,\frac{1}{2} , 1 , 2\frac{1}{2} \}$ & \cmark\\ \hline \hline
					 \ref{fig:sim_method_B}(a) 	& $K_p^B=0$ 				& $K_i^B\in\{0 , \frac{2}{10} , \frac{1}{2} , 1 , 2\frac{1}{2} \}$						 								& \xmark \\ \hline
					 \ref{fig:sim_method_B}(b) 	& $K_i^B=\tfrac{1}{2}$    &  $K_p^B\in\{0 ,  \frac{2}{10} ,\frac{1}{2} , 1 , 2\frac{1}{2} \}$ 										& \xmark\\  \hline 
					 \ref{fig:sim_method_B}(c) 	& $K_i^B=\tfrac{1}{2}$ 	&  $K_p^B\in\{0 , \frac{2}{10} ,\frac{1}{2} , 1 , 2\frac{1}{2} \}$										& \cmark\\ \hline
				\end{tabular}
			\end{center}
		\end{table}	
		Four situations are considered for the stabilizing controller \eqref{eq:stabalizing_controller_input_shaping_adjusted}. The influence of the proportional action coincide with the behaviour \ref{fig:sim_method_A}(b)
		and integral action can be seen in Fig \ref{fig:sim_method_B}(a). Simple calculations show that if only the integral action and the reference action are considered, the integral action has no effect since  $\bar{U}(t)-\bar{U}^*=U(t)-U^*=0$. Therefore, in Fig \ref{fig:sim_method_B}(a) is the reference action excluded. Note that if all three actions are active, then the integral action acts on the changes in $\bar{U}(t)$ caused by the proportional action. The influence of the reference action is shown in Fig \ref{fig:sim_method_B}(b-c). An overview of the used variables is given in Table \ref{tab:params_method_A}.\\
		In Fig \ref{fig:sim_method_B}(a) it is shown that increasing the integral action result in a rise-time reduction at the expense of vibrations. Comparing Fig \ref{fig:sim_method_B}(b) with \ref{fig:sim_method_B}(c) shows that the reference action increases rise-time at a cost of vibrations. The vibrations can be damped-out by means of the proportional action at a cost of overshoot and settling-time.
		
\subsection{Comparison: Output shaping \& Input shaping}
		Finally, the output and input shaping controllers, respectively \eqref{eq:stabalizing_controller_A_adjusted} and \eqref{eq:stabalizing_controller_input_shaping_adjusted}, are compared. It can be seen in \eqref{eq:stabalizing_controller_A_adjusted} and \eqref{eq:stabalizing_controller_input_shaping_adjusted} that the integral actions are different, whereas the proportional and reference actions coincide. The proportional actions of both \eqref{eq:stabalizing_controller_A_adjusted} and \eqref{eq:stabalizing_controller_input_shaping_adjusted} can be seen as damping injection and the integral action of \eqref{eq:stabalizing_controller_A_adjusted} shapes the energy.

		The performance of the controllers are tested for the following two requirement cases 
        \begin{enumerate}
            \item minimizing the settling-time for a $2\%$ margin, a maximum overshoot of $2\%$, limited vibrations, and zero steady-state error.
            \item  Minimizing the settling-time for a $2\%$ margin, a maximum overshoot of $0\%$, limited vibrations, and zero steady-state error.
        \end{enumerate}
        The results for case 1 and 2 are respectively depicted in Fig \ref{fig:sim_method_AB}(a) and \ref{fig:sim_method_AB}(b).
		In Fig \ref{fig:sim_method_AB}(a) it can be seen that the rise-times are similar. However, the controller for input shaping has a faster settling-time. In Fig \ref{fig:sim_method_AB}(b) it can be seen that the oscillations due to the input shaping requires more proportional action, which comes at a cost of the rise-time. However, it can also be observed that the controller through input shaping reaches high accuracy faster than the controller obtained through output shaping. Both methods result in smooth transients.

		
		\section{Conclusions }
		In this paper we have presented a new passive map for piezoelectric systems. The port-variables of this passive map are integrable. Next, we have considered the boundary control problem of this system. To achieve this we have utilized the integrability property of the port-variable and present two boundary control techniques for regulating the strain. An important advantage of this techniques is that there is no need to solve for Casimir functions, which is generally the case \cite{915398}. 

		\bibliographystyle{IEEEtran}
		\bibliography{references}

\begin{thebibliography}{10}
\providecommand{\url}[1]{#1}
\csname url@samestyle\endcsname
\providecommand{\newblock}{\relax}
\providecommand{\bibinfo}[2]{#2}
\providecommand{\BIBentrySTDinterwordspacing}{\spaceskip=0pt\relax}
\providecommand{\BIBentryALTinterwordstretchfactor}{4}
\providecommand{\BIBentryALTinterwordspacing}{\spaceskip=\fontdimen2\font plus
\BIBentryALTinterwordstretchfactor\fontdimen3\font minus
  \fontdimen4\font\relax}
\providecommand{\BIBforeignlanguage}[2]{{%
\expandafter\ifx\csname l@#1\endcsname\relax
\typeout{** WARNING: IEEEtran.bst: No hyphenation pattern has been}%
\typeout{** loaded for the language `#1'. Using the pattern for}%
\typeout{** the default language instead.}%
\else
\language=\csname l@#1\endcsname
\fi
#2}}
\providecommand{\BIBdecl}{\relax}
\BIBdecl

\bibitem{Voss2011Shapecontrol}
T.~Voss and J.~Scherpen, ``\BIBforeignlanguage{English}{Stabilization and shape
  control of a 1d piezoelectric timoshenko beam},''
  \emph{\BIBforeignlanguage{English}{Automatica}}, vol.~47, no.~12, pp.
  2780--2785, 12 2011.

\bibitem{Voss09disc}
T.~Vo{\ss} and J.~M.~A. Scherpen, ``Structure preserving port-hamiltonian
  discretization of a 1-d inflatable space reflector,'' in \emph{Proc. of the
  European Control Conference, Shanghai, China}, 2009, pp. 850--855.

\bibitem{MenOSIAM2014}
K.~A. Morris and A.~{\"O}. {\"O}zer, ``Modeling and stabilizability of
  voltage-actuated piezoelectric beams with magnetic effects,'' \emph{SIAM
  Journal on Control and Optimization}, vol.~52, no.~4, pp. 2371--2398, 2014.

\bibitem{voss2010port}
T.~Vo{\ss}, \emph{Port-Hamiltonian Modeling and Control of Piezoelectric Beams
  and Plates: Application to Inflatable Space Structures}.\hskip 1em plus 0.5em
  minus 0.4em\relax s.n., 2010.

\bibitem{carrera2011beam}
E.~Carrera, G.~Giunta, and M.~Petrolo, \emph{Beam Structures: Classical and
  Advanced Theories}, ser. EngineeringPro collection.\hskip 1em plus 0.5em
  minus 0.4em\relax Wiley, 2011.

\bibitem{VANDERSCHAFT2002166}
A.~van~der Schaft and B.~Maschke, ``Hamiltonian formulation of
  distributed-parameter systems with boundary energy flow,'' \emph{Journal of
  Geometry and Physics}, vol.~42, no.~1, pp. 166 -- 194, 2002.

\bibitem{port-HamiltonianIntroductory14}
A.~J. van~der Schaft and D.~Jeltsema, \emph{Port-Hamiltonian Systems Theory: An
  Introductory Overview}, ser. Foundations and Trends in Systems and
  Control.\hskip 1em plus 0.5em minus 0.4em\relax now Publishers Inc., 2014,
  vol.~1, no. 2-3.

\bibitem{GoloSchaft2004}
G.~Golo, V.~Talasila, A.~J. van~der Schaft, and B.~Maschke, ``Hamiltonian
  discretization of boundary control systems,'' \emph{Automatica}, vol.~40,
  no.~5, pp. 757--771, May 2004.

\bibitem{shubov1996basis}
M.~A. Shubov, ``Basis property of eigenfunctions of nonselfadjoint operator
  pencils generated by the equation of nonhomogeneous damped string,''
  \emph{Integral Equations and Operator Theory}, vol.~25, no.~3, pp. 289--328,
  1996.

\bibitem{liu2009elementary}
W.~Liu, \emph{Elementary feedback stabilization of the linear
  reaction-convection-diffusion equation and the wave equation}.\hskip 1em plus
  0.5em minus 0.4em\relax Springer Science \& Business Media, 2009, vol.~66.

\bibitem{cox1995rate}
S.~Cox and E.~Zuazua, ``The rate at which energy decays in a string damped at
  one end,'' \emph{Indiana University Mathematics Journal}, pp. 545--573, 1995.

\bibitem{915398}
R.~Ortega, A.~J. van~der Schaft, I.~Mareels, and B.~Maschke, ``Putting energy
  back in control,'' \emph{IEEE Control Systems Magazine}, vol.~21, no.~2, pp.
  18--33, Apr. 2001.

\bibitem{Macchelli2017_BCLDPS_dissipation}
A.~Macchelli, Y.~L. Gorrec, H.~R. Estay, and H.~Zwart, ``On the synthesis of
  boundary control laws for distributed port-hamiltonian systems,''
  \emph{{IEEE} Trans. Automat. Contr.}, vol.~62, no.~4, pp. 1700--1713, 2017.

\bibitem{shifted_passivity}
N.~Monshizadeh, P.~Monshizadeh, R.~Ortega, and A.~van~der Schaft, ``Conditions
  on shifted passivity of port-hamiltonian systems,'' \emph{CoRR}, vol.
  abs/1711.09065, 2017.

\bibitem{tac}
K.~C. Kosaraju, M.~Cucuzzella, J.~M.~A. Scherpen, and R.~Pasumarthy,
  ``Differentiation and passivity for control of {B}rayton-{M}oser systems,''
  \emph{arXiv preprint: 1811.02838}, November 2018.

\bibitem{NOLCOS}
K.~C. Kosaraju, Y.~Kawano, and J.~M.~A. Scherpen, ``Krasovskii's passivity,''
  \emph{arXiv e-prints: 1903.05182}, March 2019.

\bibitem{7846443}
K.~C. Kosaraju, R.~Pasumarthy, N.~M. Singh, and A.~L. Fradkov, ``Control using
  new passivity property with differentiation at both ports,'' in
  \emph{Proceedings of the Indian Control Conference (ICC)}, Guwahati, India,
  Jan. 2017, pp. 7--11.

\bibitem{mtns}
R.~Pasumarthy, K.~C. Kosaraju, and A.~Chandrasekar, ``On power balancing and
  stabilization for a class of infinite-dimensional systems,'' in \emph{21st
  International Symposium on Mathematical Theory of Networks and Systems,
  Groningen, The Netherlands}, Jul. 2014.

\bibitem{Cucuzzella_arxiv2019}
M.~Cucuzzella, R.~Lazzari, Y.~Kawano, K.~C. Kosaraju, and J.~M.~A. Scherpen,
  ``Robust passivity-based control of boost converters in dc microgrids,''
  \emph{arXiv preprint: 1902.10273}, 2019.

\bibitem{SchSiu}
M.~Schoberl and A.~Siuka, ``On {C}asimir functionals for infinite-dimensional
  port-{H}amiltonian control systems,'' \emph{IEEE Transactions on Automatic
  Control}, vol.~58, no.~7, pp. 1823--1828, 2013.

\bibitem{lanczos1970variational}
C.~Lanczos, \emph{The Variational Principles of Mechanics}, ser. Dover Books On
  Physics.\hskip 1em plus 0.5em minus 0.4em\relax Dover Publications, 1970.

\bibitem{ZhenBaoOmer}
Z.-H. Luo, B.-Z. Guo, and {\"O}.~Morg{\"u}l, \emph{Stability and Stabilization
  of Infinite Dimensional Systems with Applications}.\hskip 1em plus 0.5em
  minus 0.4em\relax Springer Science \& Business Media, 2012.

\bibitem{swaters}
G.~E. Swaters, \emph{Introduction to {H}amiltonian Fluid Dynamics and Stability
  Theory}.\hskip 1em plus 0.5em minus 0.4em\relax CRC Press, 1999, vol. 102.

\bibitem{7993047}
M.~Zhang, P.~Borja, R.~Ortega, Z.~Liu, and H.~Su, ``{PID} passivity-based
  control of port-{H}amiltonian systems,'' \emph{IEEE Transactions on Automatic
  Control}, vol.~63, no.~4, pp. 1032--1044, Apr. 2018.

\bibitem{bradie2006friendly}
B.~Bradie, \emph{A Friendly Introduction to Numerical Analysis}, ser. Featured
  Titles for Numerical Analysis.\hskip 1em plus 0.5em minus 0.4em\relax Pearson
  Prentice Hall, 2006.

\end{thebibliography}
	\end{document}